\documentclass{article}

\usepackage{graphics}
\usepackage{epstopdf}
\usepackage{tikz}
\usepackage{pgf}
\usepackage{dutchcal}
  \usepackage{color}
  \definecolor{purple}{rgb}{0.5,0,1}
  \definecolor{orange}{cmyk}{0,0.7,1,0}
  \definecolor{midgrey}{gray}{0.5}
  \usepackage{amssymb}
    \usepackage{amsthm}
  \usepackage{amsmath}

  \newtheorem{theorem}{Theorem}[section]
  \newtheorem{remark}{Remark}[section]
  \newtheorem{corollary}{Corollary}[section]
    \newtheorem{lemma}{Lemma}[section]

  \makeatletter
  \let\c@equation\undefined
  \let\c@section\undefined
  \let\c@subsection\undefined
  \let\c@zad\undefined
  \makeatother
  \newcounter{section}
    \def\thesection{\arabic{section}}
\newcommand{\mr}[1]{\mathring{#1}}

  \newcounter{equation}
  \def\theequation{\thesection.\arabic{equation}}
  \newcounter{subsection}[section]

\newcommand{\beq}{\begin{equation}}
\newcommand{\eeq}{\end{equation}}

\newcommand{\papap}{\papa \end{proof}}

\newcommand{\ov}[1]{\overline{#1}}
\newfont{\smoldita}{cmmib8}
\newfont{\boldita}{cmmib10}
\newfont{\bboldita}{cmmib10}

\newcommand{\nn}{\nonumber}
\newcommand{\e}{\epsilon}

\newcommand{\p}{\partial}

\newcommand{\cl}[2]{\int\limits_{#1}^{#2}}

\newcommand{\ti}[1]{\tilde{#1}}

\newcommand{\la}{\lambda}

\newfont{\llmt}{cmmib10 scaled\magstep2}
\newfont{\lmt}{cmmib10 scaled\magstep1}
\newfont{\mt}{cmmib10}
\newfont{\smt}{cmmib8}
\newfont{\bgr}{cmmib9 scaled\magstep1}
\newcommand{\mb}[1]{\boldsymbol{#1}}
\newcommand{\mbb}[1]{\mathbb{#1}}
\newcommand{\mc}[1]{\mathcal{#1}}

\begin{document}
\title{Multiscale malaria models and their uniform in-time asymptotic analysis\footnote{Both authors was supported by the DSI/NRF SARChI M3B2 grant N 82770 }}
\author{J. Banasiak$^{1,2,}$\footnote{Corresponding author, jacek.banasiak@up.ac.za} and S.Y. Tchoumi$^{1,3}$}
\date{\small{1. Department of Mathematics and Applied Mathematics, University of Pretoria, Pretoria, South Africa\\
2. Institute of Mathematics, Łódź University of Technology, Łódź, Poland\\
3. Department of Mathematics and Computer Sciences ENSAI, University of Ngaoundere, Ngaoundere, Cameroon}}

\maketitle
\begin{abstract}
In this paper, we show that an extension of the classical Tikhonov--Fenichel asymptotic procedure applied to multiscale models of vector-borne diseases, with time scales determined by the dynamics of human and vector populations, yields a simplified model approximating the original one in a consistent, and uniform for large times, way. Furthermore, we construct a higher-order approximation based on the classical Chapman--Enskog procedure of kinetic theory and show, in particular, that it is equivalent to the dynamics on the first-order approximation of the slow manifold in the Fenichel theory.
\end{abstract}
\small{\textbf {Key words:} multiscale malaria models, singularly perturbed problems, approximation of slow manifold, uniform in time asymptotics, global stability of solutions, group renormalization method, Chapman--Enskog expansion}\\
\small{\textbf{MSC 2020:}\;34E13,\,34E15,\,34D23,\,92D30,\,92-10}
\section{Introduction}
Infectious diseases are driven by a complex interplay of many mechanisms, often acting at vastly different time or space scales. For example, even ignoring the spatial dependence, the evolution of malaria, which is the main subject of the presented paper, is driven by interactions between humans and female \textit{Anopheles} mosquitoes (with their vital dynamics), and the parasite \textit{Plasmodium}, see, e.g., \cite{Chitnis, Ngwa}. Even ignoring the dynamics of the parasite, the interactions of mosquitoes and humans coupled with their vital dynamics and ecology lead to formidable models consisting of many highly coupled nonlinear equations, e.g., \cite{NgwaB}. A rigorous analysis of such models in their original form is often next to impossible. One can, however, observe that the life cycle of mosquitoes is much faster than that of humans and then, under appropriate assumptions, the mosquitoes should reach equilibrium before the human population undergoes any significant change. Thus, it is plausible to discard the part of the model describing the vector's dynamics while replacing the vector variables in the human part of the model with their equilibrium values, obtaining a reduced model.  Models driven by processes occurring at vastly different rates are called multi-scale. The approach described above is often called the quasi-steady-state assumption (QSSA), e.g., \cite{SegSlem}, and has been successfully used on many occasions to simplify complex multi-scale models. In particular, it has been applied to malaria models in \cite{CapMBS,CapBio,RASS,RashVent,RashKooi} and to other multi-scale epidemiological models in \cite{BL,Bak1,BP,Milaine,Kuehn}.   An interesting outcome of the multi-scale approach to malaria models is the derivation of a new form of the infection force, which is of the Holling 2 form, see \cite{RashVent,RashKooi}; as pointed out in \cite{DiekHee}, such an infection force cannot intrinsically appear in diseases where the infectious and susceptible individuals come from the same population.

The mathematical theory specifying the conditions under which QSSA is valid, that is, that the solutions of the original model can be approximated by the solution of the reduced model when the relevant parameters are small, was developed independently in \cite{TVS} and \cite{Fe}. Though based on different formalisms and using different languages, the theories are equivalent. A survey of applications of the latter to mathematical biology can be found in \cite{Hek}, while the life science applications of the former are the content of \cite{BL}. The problem with both approaches is that they provide approximation valid only on finite time intervals, which is not satisfactory in many cases since they do not ensure that one can approximate the long-term dynamics of the original model by that of the reduced one;  even the global solvability of the reduced model does not imply the same for the original one.  Thus, the reduction of complex epidemiological models in most papers is incomplete. A notable example is \cite{Kuehn} where, though the emphasis is on nonhyperbolic points on the slow manifold, the authors consider the behaviour of solutions as $t\to \infty$ by matching the finite time asymptotics with respect to the small parameter with the known long-term asymptotics of the full system of equations.

The required extension of the theory, ensuring that the approximation is uniform on unbounded time intervals without stability assumptions on the full system, came in \cite{Hop} (see also \cite[Appendix C.18]{Khal});  different, direct proofs were given in \cite{MarCz, BanViet}.

The aim of this paper is to illustrate the applicability of this theory to models of vector-borne diseases where, as noted earlier, the multiscale character is due to the different vital rates of the host and vector populations. For this purpose, we have selected two models. The first one, a minimalistic model of a nonlethal disease (like dengue) with constant vector and host populations, has been introduced and analysed from the multi-scale analysis point of view in \cite{RASS, RashVent}. The authors based their theoretical analysis on the Fenichel theory, while the conclusions on the long-term approximation followed from numerical experiments. A novel aspect of \cite{RashVent} is a higher-order approximation of the slow manifold of the original flow obtained using the manifold's invariance property. We note that an alternative approach to such an approximation using a renormalization group method can be found in \cite{MarCz}, and by the so-called
Chapman--Enskog expansion in \cite{BanRG}; we shall briefly compare them in Appendix \ref{ChE}. Here we complete the analysis of \cite{RASS,RashVent} by showing that the QSS approximation is uniformly valid for all $t\geq 0$ when the basic reproduction number satisfies $\mc R_0<1$ as well as when $\mc R_0>1$.

The second model extends the first one, allowing the host and the vector populations to vary, and introducing disease-induced mortality in the host population.  The former naturally complicates both forces of infection, which are assumed to have the so-called standard form, corresponding to a large host population, see \cite[Table 2.3]{Chitnis}, while the latter provides additional coupling in the system. Nevertheless, also in this case, we can show that the reduced three-dimensional model provides a satisfactory approximation to the original one for all $t\geq 0$ in both regimes: $\mc R_0<1$ and $\mc R_0>1.$

\section{Analysis of some models of vector-borne diseases}

\begin{figure}
\begin{center}
\begin{tikzpicture}[scale=0.65]
\draw (-1,-5)--(1,-5)--(1,-3)--(-1,-3)--(-1,-5);
\draw (-5,-5)--(-3,-5)--(-3,-3)--(-5,-3)--(-5,-5);
\draw [->] (0,-6)--(-6,-6)--(-6,-4)--(-5,-4);
\draw [->] (-3,-4)--(-1,-4);
\draw [->] (-4,-3)--(-4,-2);
\draw [->] (0,-3)--(0,-2);
\draw [->] (-4,-5)--(-4,-6);
\draw [->] (0,-5)--(0,-6);
\draw(-4,-4) node {$S_v$};\draw(-3.25,-2.5) node {$\hat\mu_v S_v$};\draw(-2,-3.5) node {$\hat \la_v S$};\draw(0,-4) node {$I_v$};\draw(-2,-5.5) node {$\hat b_v(N_v)$};\draw(0.8,-2.5) node {$\mu_vI_v$};
\draw (-1,-1)--(1,-1)--(1,1)--(-1,1)--(-1,-1);
\draw (-5,-1)--(-3,-1)--(-3,1)--(-5,1)--(-5,-1);
\draw (-9,-1)--(-7,-1)--(-7,1)--(-9,1)--(-9,-1);
\draw [->] (0,-1.5)--(-10,-1.5)--(-10,0)--(-9,0);
\draw [->] (-7,0)--(-5,0);
\draw [->] (-3,0)--(-1,0);
\draw [->] (-4,1)--(-4,2);
\draw [->] (-8,1)--(-8,2);\draw [->] (0,1)--(0,2);
\draw [->] (0,-1)--(0,-1.5);
\draw [->] (-4,-1)--(-4,-1.5);
\draw [->] (-8,-1)--(-8,-1.5);
\draw(-8,2.5) node {$\mu_h S_h$}; \draw(-8,0) node {$S_h$};\draw(-6,0.5) node {$\la_h S_h$};
\draw(-4,0) node {$I_h$};\draw(-4,2.5) node {$(\mu_h +\mu_d)I_h$};\draw(-2,0.5) node {$\gamma_h R_h$};\draw(0,0) node {$R_h$};\draw(0,2.5) node {$\mu_h  R_h$};\draw(-2,-1.25) node {$b_h(N_h)  +\rho_h R_h$};
\draw [dashed, ->] (-1,-3.2)--(-6.95,-0.2);
\draw [dashed, ->] (-5,-0.5)..controls(-6,-2)..(-5,-3.5);
\end{tikzpicture}
\caption{A $S_hI_hR_hS_vI_v$ malaria model}\label{Fig1}
\end{center}
\end{figure}
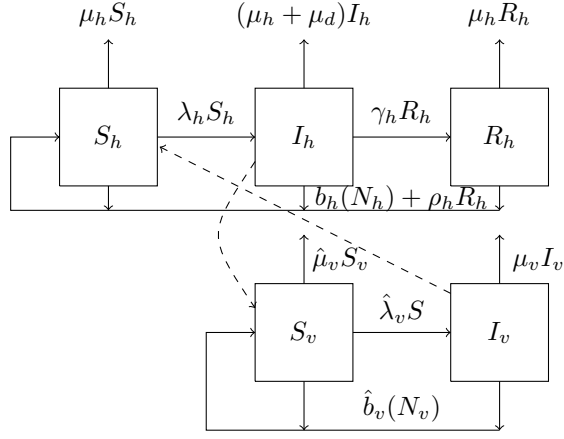
As noted in the Introduction, we consider a minimalistic model of a vector-borne disease. In general, we divide the host population into the following classes: susceptibles $S_h$, infectives $I_h$ and recovered with (waning) immunity $R_h$,  while the vector population is divided into susceptibles $S_v$, and infectives $I_v$. We describe the dynamics of the process, shown in Fig. \ref{Fig1}, by the system
\begin{equation}
\begin{split}
{S}'_h&= b_h( N_h) -\mu_{h} S_h  +\rho_hR_h -  \lambda_{h}S_h,\\
{I}'_h&= \lambda_{h}S_h - (\gamma_h+ \mu_{h}+\mu_{d}) I_h, \\
{R}'_h &= \gamma_h I_h - (\rho_h +\mu_{h})R_h,\\
{S}'_v&= \hat b_v(N_v)-\hat \mu_{v}S_v  - \hat\lambda_{v} S_v,  \\
{I}'_v&= -  \hat\mu_{v}I_v +  \hat\lambda_{v}S_v,
\end{split}\label{C01main-Mal01}
\end{equation}
where $\mbox{}'$ denotes the derivative with respect to time. Here, $N_h$ and $N_v$ denote the total host and vector populations, respectively, and $\hat{\mbox{}}$ is introduced to label `fast' processes.
The infection rates, see \cite{Chitnis,chit2005}, are given by
\begin{equation*}
    \la_h = \vartheta_h(N_h,N_v)\beta_{hv}\frac{I_v}{N_v}, \quad \hat\la_v = \hat \vartheta_v(N_h,N_v)\beta_{vh}\frac{I_h}{N_h},
    \end{equation*}
where $\vartheta_h$ and $\hat \vartheta_v$ are, respectively, the number of bites a human can receive and the number of bites a vector can inflict in a unit of time and $\beta_{hv}$ and $\beta_{vh}$ are respective probabilities of infection. The form of $\vartheta_h$ and $\hat \vartheta_v$ was proposed in \cite{Chitnis} and derived using a Holling type argument in \cite{JBBiomath}. The explicit form of $\la_h$ and $\hat\la_v$ depends on many factors, see \cite{Chitnis}; in this paper, we consider two particular models described in respective subsections.

 The vector's total birth rate is denoted by $\hat b_v$, and its death rate per capita is $\hat\mu_v.$  For the host, the total birth rate is denoted by $b_h$, and its natural death rate per capita is $\mu_h$, while the disease-induced death rate is $\mu_d$. The host's recovery rate (per capita) is $\gamma_h,$ and the immunity loss rate is $\rho_h$.

We assume all parameters are positive, with typical values in Table \ref{Params}, from where
\begin{table}[tbp] 
\caption{Parameter values, \cite[Chapter 3]{chit2005}  }
\label{Params}
\begin{center}
{\footnotesize
\begin{tabular}{|c|c|c|}
\hline
Parameters &  day$^{-1}$ & year$^{-1}$ \\ \hline
$\gamma_h$ & $1.4\times 10 ^{-3} -17 \times 10 ^{-3}$& $0.511-6.205$ \\ \hline
$\mu_d$ & $0 - 4.1 \times 10 ^{-4}$& $0-1.5 \times 10 ^{-1}$\\ \hline
$\rho_h$ &  $5.5\times 10^{-5}- 1.1 \times 10 ^{-2}$ & $0.02-4.05 $\\ \hline
$\mu_{h}$ & $3.4 \times 10 ^{-5} - 6.8\times 10 ^{-5}$ &  $1.25\times 10^{-2}-2.5 \times 10 ^{-2}$ \\ \hline
$\hat\mu_{v}$&$0.05 - 0.28$ & $1.83\times 10 ^{1} - 10.2 \times 10 ^{1}$ \\ \hline
\end{tabular}
}
\end{center}
\end{table}
we see that the time scale of the vector population is much faster than that of the host. Hence, assuming that the vector's population has a tendency to reach its equilibrium, only the vector's data at this equilibrium should have an impact on the host disease dynamics. Then, the vector's dynamics follows the human dynamics in the so-called slave mode, \cite{RASS,RashVent}.

Since the ratio of the host to the vector death rates is  $O(10^{-3})$, we rescale the large coefficients by defining $\mu_v = 10^{-3}\hat\mu_v$, $b_v = 10^{-3}\hat b_v$ and $\lambda_v = 10^{-3}\hat \lambda_v,$ and consider the singularly perturbed version of \eqref{C01main-Mal01},
\begin{equation}
\begin{split}
{N}'_h&= b_h( N_h) -\mu_{h} N_h  -\mu_dI_h,\\
{I}'_h&= \lambda_{h}S_h - (\gamma_h+ \mu_{h}+\mu_{d}) I_h, \\
{R}'_h &= \gamma_h I_h - (\rho_h +\mu_{h})R_h, \\
\e {N}'_v&= b_v(N_v)-\mu_{v}N_v,   \\
\e {I}'_v&=  \lambda_{v}S_v -  \mu_{v}I_v.\end{split}\label{mal1a}
\end{equation}
We observe that for $\e = 10^{-3}$, we recover    \eqref{C01main-Mal01}, see also \cite{Bak1,Milaine}.
\subsection{A simple model of a nonlethal vector born disease}
To illustrate the multi-scale approach to models of vector-borne diseases, we begin with completing the analysis in \cite{RASS,RashVent}, using  Theorem \ref{thm1}.

The authors consider an $S_hI_hS_hS_vI_v$ simplification of  \eqref{C01main-Mal01}, see Fig. \ref{Fig1a}, where it is assumed that the infectives recover without immunity, and the vital dynamics of both host and vector populations are Malthusian with equal death and birth rates, so the populations are constant. The authors also assume that the factors in the infection rates, $\vartheta_h(N_h,N_v)\beta_{hv} =: \beta_h$ and $\hat \vartheta_v(N_h,N_v)\beta_{vh} =:\hat\beta_v$, are constants.  These lead to
\begin{figure}
\begin{center}
\begin{tikzpicture}[scale=0.65]
\draw (-1,-5)--(1,-5)--(1,-3)--(-1,-3)--(-1,-5);
\draw (-5,-5)--(-3,-5)--(-3,-3)--(-5,-3)--(-5,-5);
\draw [->] (0,-6)--(-6,-6)--(-6,-4)--(-5,-4);
\draw [->] (-3,-4)--(-1,-4);
\draw [->] (-4,-3)--(-4,-2);
\draw [->] (-4,-5)--(-4,-6);
\draw [->] (0,-5)--(0,-6);
\draw(-4,-4) node {$S_v$};\draw(-3.25,-2.5) node {$\hat\mu_v S_v$};\draw(-2,-3.5) node {$\hat\la_v S$};\draw(0,-4) node {$I_v$};\draw(-2,-5.5) node {$\hat \mu_v N_v$};\draw(0.8,-2.5) node {$\mu_vI_v$};
\draw (-5,-1)--(-3,-1)--(-3,1)--(-5,1)--(-5,-1);
\draw (-9,-1)--(-7,-1)--(-7,1)--(-9,1)--(-9,-1);
\draw [->] (-4,-1.72)--(-10,-1.72)--(-10,0)--(-9,0);
\draw [->] (-7,0)--(-5,0);
\draw [->] (-4,1)--(-4,2);
\draw [->] (-8,1)--(-8,2);
\draw [->] (-4,-1)--(-4,-1.72);
\draw [->] (-8,-1)--(-8,-1.72);
\draw(-8,2.5) node {$\mu_h S_h$}; \draw(-8,0) node {$S_h$};\draw(-6,0.5) node {$\la_h S_h$};
\draw(-4,0) node {$I_h$};\draw(-4,2.5) node {$\mu_hI_h$};\draw(-7.85,-1.33) node {$\mu_h N_h +\gamma_hI_h$};
\draw [dashed, ->] (-1,-3.2)--(-6.95,-0.2);
\draw [dashed, ->] (-5,-0.5)..controls(-6,-2)..(-5,-3.5);
\end{tikzpicture}
\caption{The $S_hI_hS_hS_vI_v$ model considered in \cite{RASS,RashVent}.}
\end{center}
\label{Fig1a}
\end{figure}
\begin{equation}\label{c01main}
\begin{split}
S'_h&=(\mu_h +\gamma_h)I_h  - \frac{\beta_h}{N_v} S_hI_v,\; \\
I_h'&= \frac{\beta_h}{N_v} S_hI_v - (\mu_h +\gamma_h) I_h,\; \\
S_v'&= \hat\mu_v (N_v-S_v)  -\frac{\hat\beta_v}{N_h}S_vI_h,\; \\
I'_v&= \frac{\hat\beta_v}{N_h}S_vI_h-\hat\mu_v I_v, \; \\
\end{split}
\end{equation}
with  $(S_h(0),I_h(0),S_v(0),I_v(0))= (\mr I_{h}, \mr S_{h},\mr S_{v},\mr I_{v}).$
As discussed above, it is reasonable to transform \eqref{c01main} to the singularly perturbed system
\begin{equation}\label{c01main2}
 \begin{split}
I'_h&= \frac{\beta_h}{N_v} I_v(N_h-I_h) - \alpha_h I_h,\qquad I_h(0)=\mr I_{h},\\
\e I'_v&=\frac{\beta_v}{N_h}I_h(N_v-I_v)-{\mu}_v I_v,\qquad I_v(0)=\mr I_{v},
\end{split}
\end{equation}
where we used the fact that both populations are constant, denoted  $\alpha_h = \mu_h+\gamma_h$ and changed the notation of the `fast' rates, as in \eqref{mal1a}.

The equation for the quasi-steady state, \eqref{deg1},  is $ \frac{\beta_v}{N_h}I_h(N_v-I_v)-{\mu_v} I_v=0,$ hence the slow manifold \eqref{sm} is given by
\begin{equation}
 \mc M =\left\{(I_v,I_h);\;I_v = \frac{\beta_v N_v I_h}{\mu_v N_h +\beta_v I_h}\right\}.\label{mcM}
\end{equation}
The fast dynamics equation, \eqref{auxil'},  takes the form
\begin{equation}
\ti{I}'_{v,\tau}= \frac{{\beta_v}}{N_h}(I_h(N_v-\ti{I}_v)-{\mu_v} \ti I_v),\;\qquad \ti{I}_v(0)=\mr I_v,
\end{equation}
where $I_h$ is treated as a parameter. Note that whenever the derivatives with respect to $t$ and $\tau$ appear side by side, we distinguish them using the notation $\phantom{x}'_{,t}$ and $\phantom{x}'_{,\tau}$, respectively. Since it is a scalar equation, we see that the stability condition \eqref{mug} is satisfied with $\kappa = \mu_v$ and any $\mr I_v\geq 0$ is in the basin of attraction of points of $\mc M$.

Next,  the reduced equation \eqref{deg2} is given by
\begin{equation}
\label{c01main3}
\bar I'_{h,t}=\beta_h\beta_v\frac{(N_h-\bar{I}_h)\bar{I}_h}{ \beta_v\bar{I}_h+\mu_v N_h}-\alpha_h \bar{I}_h,\; \qquad \bar{I}_h(0)=\mr I_{h}.
\end{equation}
Eq. \eqref{c01main3} has the trivial  equilibrium (DFE) $\bar I_h^*=0$ and a unique endemic equilibrium
\begin{equation}
\bar I_h^*  =\frac{N_h(\beta_h\beta_v-\mu_v\alpha_h)}{\beta_v(\beta_h+\alpha_h)},
\label{redeq}
\end{equation}
which is positive if and only if
\begin{equation}
\beta_h\beta_v-\mu_v\alpha_h>0.
\label{EEcond}
\end{equation}
Since it is easy to see that the basic reproduction number is given by
$
\mc R_0 = \frac{\beta_h\beta_v}{\mu_v\alpha_h},$
\eqref{EEcond} expresses the standard condition $\mc R_0>1$ for the emergence of an endemic equilibrium.
Note that the pair $(\bar I^*_h,\bar I_v^*)$, where
\begin{equation}
\bar I_v^* = \frac{\beta_v N_v \bar I^*_h}{\mu_v N_h +\beta_v \bar I^*_h} = \frac{N_v(\beta_h\beta_v-\mu_v\alpha_h)}{\beta_h(\mu_v + \beta_v)}
\label{barIst}
\end{equation}
is also an endemic equilibrium for \eqref{c01main2} (for any $\e>0$) if \eqref{EEcond} is satisfied.

Following \eqref{tiv0}, the zeroth order term of the initial layer can be written as
$$
\ti I_{v,0}(\tau) = \ti I_v(\tau) - \frac{\beta_v N_v \mr I_h}{\mu_v N_h +\beta_v \mr I_h},
$$
where
$$
\ti{I}'_{v,\tau}= \frac{{\beta_v}}{N_h}(\mr I_h(N_v-\ti{I}_v)-{\mu_v} \ti I_v = \frac{\beta_vN_v\mr I_h}{N_h} -\ti I_v\frac{\mu_vN_h +\beta_v\mr I_h}{N_h},\;\qquad \ti{I}_v(0)=\mr I_v.
$$
Hence
\begin{equation}\label{Eq30}
\ti I_{v,0}(\tau) = e^{-\frac{\mu_vN_h +\beta_v\mr I_h}{N_h}\tau}\left(\mr I_v - \frac{\beta_vN_v\mr I_h}{\mu_vN_h +\beta_v\mr I_h}\right).
\end{equation}
Moreover, after some algebra, the term \eqref{CE5} is given by
\begin{equation}
\bar I_{v,1}(t) = -\mu_v\beta_v N_h^2 N_v \frac{I_h (N_h(\beta_h\beta_v-\alpha_h\mu_v)-I_h\beta_v(\alpha_h+\beta_h))}{(\beta_vI_h +\mu_v N_h)^4}.
\label{barIv1}
\end{equation}
Hence, the Chapman-Enskog approximate equation of the bulk part, \eqref{CE5a}, is
\begin{equation}\label{ChEI}
\begin{split}
&\bar I'_{h,\e}  = \beta_h\beta_v\frac{(N_h-\bar{I}_{h,\e})\bar{I}_{h,\e}}{ \beta_v\bar{I}_{h,\e}+\mu_v N_h}-\alpha_h \bar{I}_{h,\e}\\
&\phantom{xx}-\e \mu_v\beta_v\beta_h N_h^2 \frac{\bar I_{h,\e} (N_h(\beta_h\beta_v-\alpha_h\mu_v)-\bar I_{h,\e}\beta_v(\alpha_h+\beta_h))}{(\beta_v\bar I_{h,\e} +\mu_v N_h)^4} (N_h-\bar I_{h,\e}).
\end{split}
\end{equation}
Then, by \eqref{tiu1} and \eqref{tiu1iv}, the  initial layer corrector to $I_h$ is given by
\begin{equation}\label{bIh1}
\begin{split}
\ti I_{h,1}(\tau)  =  -\frac{N_h\beta_h(N_h - \mr I_h)}{N_v(\mu_vN_h +\beta_v \mr I_h)^2}\left(\mu_v\mr I_v N_h+ \beta_v\mr I_h\mr I_v-\beta_v N_v \mr I_h\right)e^{-\frac{\mu_vN_h +\beta_v\mr I_h}{N_h}\tau},
\end{split}
\end{equation}
and hence the corrected initial condition for $\bar I_{h,\e}$, \eqref{baruic}, is
\begin{equation}\label{bIh0}
\bar I_{h,\e}(0) = \mr I_h - \e \frac{N_h\beta_h(N_h - \mr I_h)}{N_v(\mu_vN_h +\beta_v \mr I_h)^2}\left(\mu_v\mr I_v N_h+ \beta_v\mr I_h\mr I_v-\beta_v N_v \mr I_h\right).
\end{equation}
For shall refrain from writing down an explicit formula for $\ti I_1(\tau)$ due to its length.

Summarising,  from Theorem \ref{thm1} we infer
\begin{corollary}\label{coro3.1}
Let $\bar I_{h,\e}$ be the solution to  \eqref{ChEI} with the initial condition \eqref{bIh0} and  $\ti I_{h,1}$ by given by \eqref{bIh1}. Then for the solution $(I_{h,\e},I_{v,\e})$ of \eqref{c01main2} we have
\begin{equation}
I_{h,\e}(t) = \bar I_{h,\e}(t) + \e \ti I_{h,1}\left(\frac{t}{\e}\right) + O(\e^2),
\label{CEest1}
 \end{equation}
 uniformly on $[0,\infty)$ and, with
 \begin{equation}
 \bar I_{v,0}(t) = \frac{\beta_v N_v \bar I_{h,\e}(t)}{\mu_v N_h +\beta_v \bar I_{h,\e}(t)}
 \label{Iv0}
 \end{equation}
 and  $\bar I_{v,1}(t)$ given by \eqref{barIv1} (with $I_h$ replaced by $\bar I_{h,\e}$),
 we have
 \begin{equation}
 I_{v,\e}(t) = \bar I_{v,0}(t) +\e \bar I_{v,1}(t) +O(\e^2),
\label{Ive}
\end{equation}
uniformly on any interval $[t_0,\infty)$, $t_0>0$. The estimate on $[0,\infty)$ can be achieved by adding initial layer terms $\ti I_{v,0}$ (to get $O(\e)$ error) and $\ti I_{v,1}$ (for an $O(\e^2)$ error).
\end{corollary}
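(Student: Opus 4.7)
\medskip\noindent\textbf{Proof plan.}
The strategy is to verify that the hypotheses of the uniform-in-time singular perturbation result, Theorem \ref{thm1}, are satisfied by system \eqref{c01main2}, and then to read off the asserted expansion from the preparatory calculations already carried out above. Concretely, the slow manifold $\mc M$ in \eqref{mcM}, the fast-variable stability constant $\kappa=\mu_v$, the reduced equation \eqref{c01main3}, the initial-layer profile \eqref{Eq30}, the Chapman--Enskog correction \eqref{barIv1}, the corrected bulk equation \eqref{ChEI}, and the matched initial condition \eqref{bIh0} have all been produced by applying the formulas \eqref{deg1}--\eqref{baruic} to the present example. Thus the proof is reduced to checking that the abstract assumptions of Theorem \ref{thm1} hold here.

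First I would establish invariance of the biologically meaningful rectangle $\mc D=[0,N_h]\times[0,N_v]$: the vector field in \eqref{c01main2} points inward along its boundary (at $I_h=0$ and $I_v=0$ the components are nonnegative, and at $I_h=N_h$, $I_v=N_v$ they are nonpositive), so the trajectories are globally defined, bounded, and nonnegative. Smoothness of the right-hand side on a neighbourhood of $\mc D$ is immediate. Next, on $\mc D$ the fast equation reduces to a single linear scalar ODE in $\ti I_v$ with parameter $I_h\in[0,N_h]$ and decay rate $(\mu_v N_h+\beta_v I_h)/N_h\geq \mu_v>0$; this yields the uniform transversal stability of $\mc M$ required by Theorem \ref{thm1}, with basin of attraction $[0,N_v]$.

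The main obstacle is the last hypothesis of Theorem \ref{thm1}, namely the global asymptotic stability of the equilibrium of the reduced equation \eqref{c01main3} on the slow manifold, uniformly for initial data in the compact set $[0,N_h]$. Since \eqref{c01main3} is scalar, I would argue directly by monotonicity: writing \eqref{c01main3} as $\bar I'_h = F(\bar I_h)$, the function $F$ satisfies $F(0)=F(N_h)\cdot\mathbf 1_{\{\cdot\}}<0$ and has at most one zero in $(0,N_h)$ because $F(\bar I_h)/\bar I_h$ is strictly decreasing in $\bar I_h$. Hence when $\mc R_0<1$, the DFE $\bar I_h^*=0$ is the unique equilibrium in $[0,N_h]$ and is GAS, while when $\mc R_0>1$ the endemic equilibrium \eqref{redeq} is the unique positive equilibrium and is GAS on $(0,N_h]$; in both regimes the convergence is exponential at the corresponding linearised rate, which is what Theorem \ref{thm1} requires.

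With these verifications in place, Theorem \ref{thm1} yields the first-order Tikhonov approximation $I_{h,\e}(t)=\bar I_h(t)+O(\e)$ and $I_{v,\e}(t)=\bar I_{v,0}(t)+\ti I_{v,0}(t/\e)+O(\e)$ uniformly on $[0,\infty)$, and its Chapman--Enskog refinement promotes this to \eqref{CEest1} and \eqref{Ive}, with the bulk part governed by \eqref{ChEI} and matched initial data \eqref{bIh0}, and with the $I_v$-estimate losing the initial-layer terms when restricted to $[t_0,\infty)$ because $\ti I_{v,0}(t/\e),\ti I_{v,1}(t/\e)$ decay exponentially. Adding $\ti I_{v,0}$ and $\ti I_{v,1}$ back in restores validity on $[0,\infty)$. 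This completes the plan; no step beyond the stability verification in the previous paragraph requires new computation.
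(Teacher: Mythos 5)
Your proposal is correct and follows essentially the same route as the paper: the corollary is obtained by checking the hypotheses of Theorem \ref{thm1} against the computations already performed (slow manifold \eqref{mcM}, fast stability with $\kappa=\mu_v$, reduced equation \eqref{c01main3}, and the Chapman--Enskog and initial-layer terms), and then reading off \eqref{CEest1} and \eqref{Ive}. The only difference is that you spell out the verification of the exponential stability assumption \eqref{A5} for the scalar reduced equation via the monotonicity of $F(\bar I_h)/\bar I_h$ (note the typo in ``$F(0)=F(N_h)\cdot\mathbf 1_{\{\cdot\}}<0$'', which should read $F(0)=0$, $F(N_h)<0$), a step the paper leaves implicit for this model.
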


\subsection{Numerical simulations}\label{ss22}
Next, we provide a numerical illustration of some of the results derived above. We  use the initial conditions $I_h(0) = 100$,
$I_v(0) = 2000$, $N_h(0) = 10000$ and $N_v(0) = 50000$.  The parameter values are within the ranges presented in Table \ref{Params}, and we used $\beta_v=0.05, \beta_h=1.1$ in the case $\mc R_0<1$, and $\beta_h =4.4,
\beta_v =0.18$  in the case $\mc R_0<1$.
\begin{figure}[ht]
\begin{center}
\includegraphics[scale=0.6]{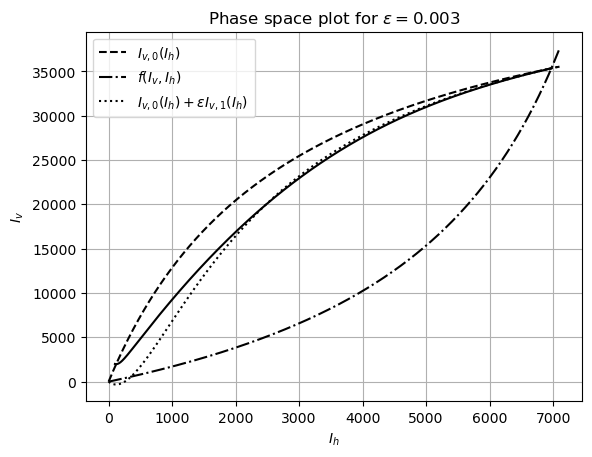}
\end{center}
\caption{Phase-space plot of system \eqref{c01main2} describing the $S_hI_hS_vI_v$ model for $\e =0.005$. The solid line is the trajectory starting at the point $(I_h(0), I_v(0))$. The dashed curve is the nullcline $f (I_v, I_h) = 0$ given in \eqref{c01main2}  and the dotted curve is given by $\bar{I}_{v,0}+\e \bar{I}_{v,1}$  as a function of $I_{h}$. }
\label{fig3}
\end{figure}
\begin{figure}[ht]
\begin{center}
\includegraphics[scale=0.7]{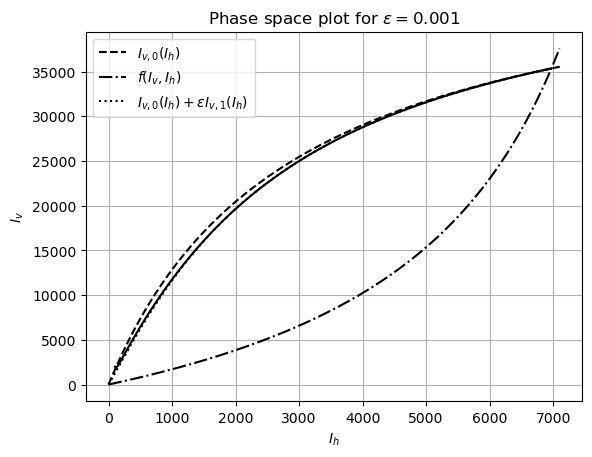}
\end{center}
\caption{An improvement of the approximation shown in Fig. \ref{fig3} for $\epsilon = 0.001.$ The approximating trajectory practically coincides with the trajectory of \eqref{c01main2}.}
\label{I_h_vs_I_v}
\end{figure}

Next, in Figures \ref{Ih_case1} -- \ref{Iv_case3_001}, we provide more detailed simulations. In each figure, we present results for $\mathcal{R}_0>1$ on the left and for $\mathcal{R}_0<1$ on the right.
\begin{figure}[!h]
\begin{center}
\includegraphics[scale=0.4]{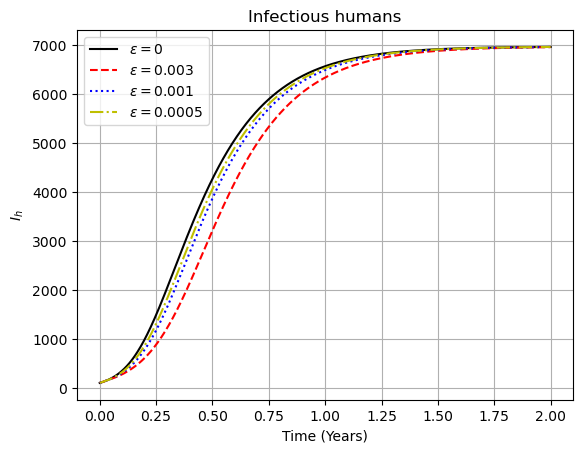}
\includegraphics[scale=0.4]{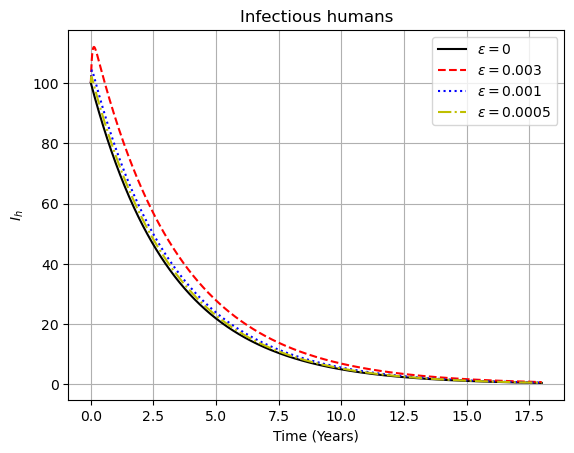}
\end{center}
\caption{Graphs of $I_{h,\e}(t),$ the solutions of \eqref{c01main2} for $\epsilon = 0.005$, $0.001$ and $0.0005$, and the solution $I_h(t)$ of equation \eqref{c01main3} represented by $\epsilon = 0.$  }
\label{Ih_case1}
\end{figure}

\begin{figure}[!h]
\begin{center}
\includegraphics[scale=0.4]{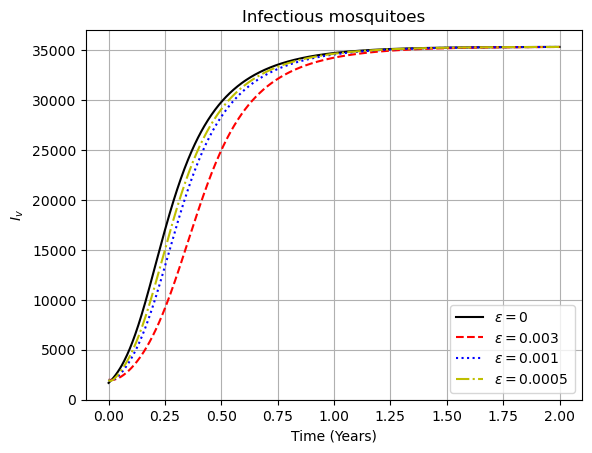}
\includegraphics[scale=0.4]{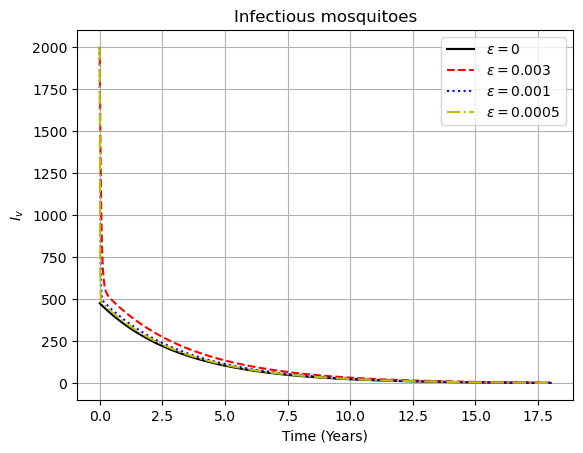}
\end{center}
\caption{Graphs of $I_{v,\epsilon}(t)$, the solutions of \eqref{c01main2} for $\epsilon = 0.005, 0.001\;\text{and}\; 0.0005$, against $\bar{I}_{v,0}(t)$ given by \eqref{Iv0}, represented by $\epsilon = 0.$   We clearly see the initial layer effect caused by the mismatch of initial conditions for $I_{v,\e}$ and $\bar I_{v,0}.$}
\label{Iv_case1}
\end{figure}



\begin{figure}[!h]
\begin{center}
\includegraphics[scale=0.4]{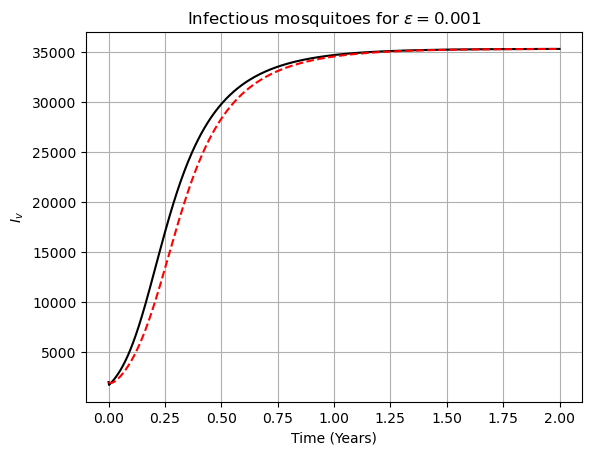}
\includegraphics[scale=0.4]{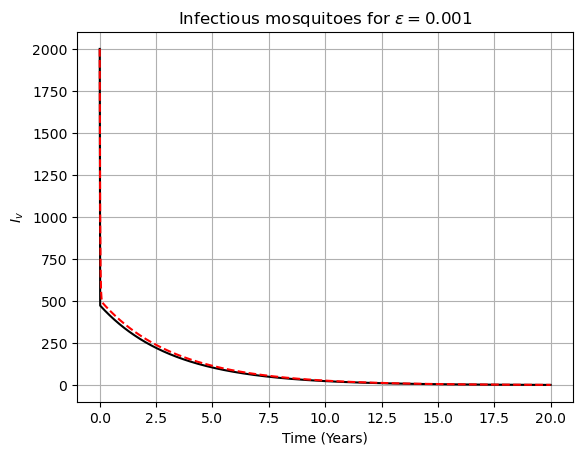}
\end{center}
\caption{The above figures show the improvement in the approximation presented in Fig. \ref{Iv_case1}, achieved by adding the initial layer correction. The red dotted line is the graph of the solution $I_{v,\epsilon}(t)$ of \eqref{c01main2} and the black solid line is the graph of $\bar{I}_{v,0}(t)+\tilde{I}_{v,0}(\tau),$ given by \eqref{Iv0} and \eqref{Eq30} in the case  $\epsilon=0.001$.}
\label{Iv_case2_001}
\end{figure}


\newpage

\begin{figure}[!h]
\begin{center}
\includegraphics[scale=0.4]{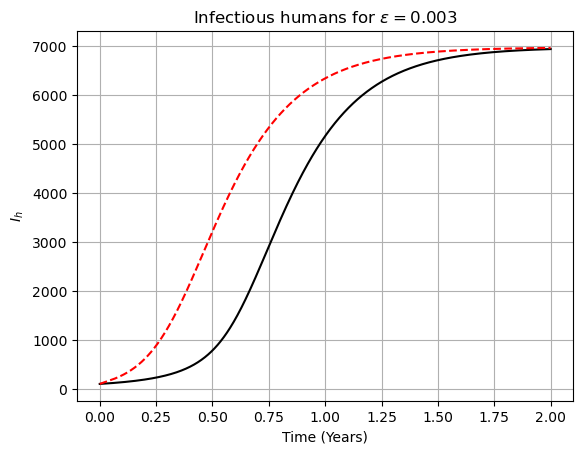}
\includegraphics[scale=0.4]{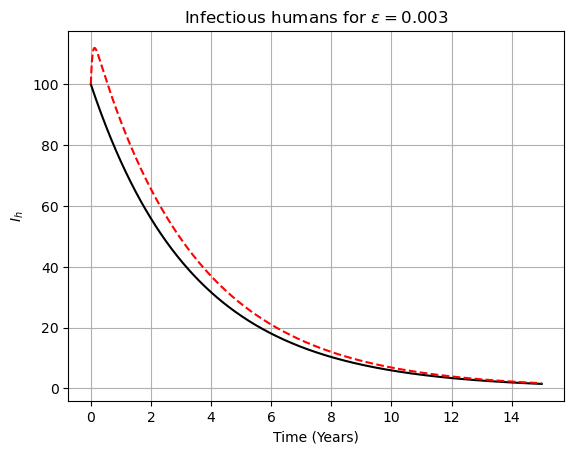}
\end{center}
\caption{These figures show in red dotted line the graphs of the solutions $I_{h,\epsilon}(t)$ of \eqref{c01main2} and in solid black line the solution of the equation \eqref{ChEI} where $\epsilon=0.003$.}
\label{Ih_case3_005}
\end{figure}

\begin{figure}[!h]
\begin{center}
\includegraphics[scale=0.4]{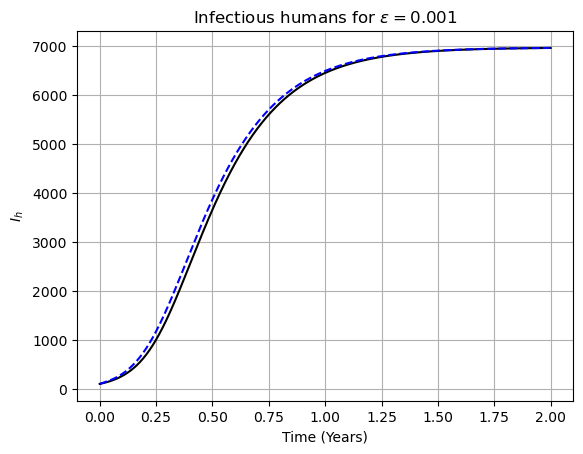}
\includegraphics[scale=0.4]{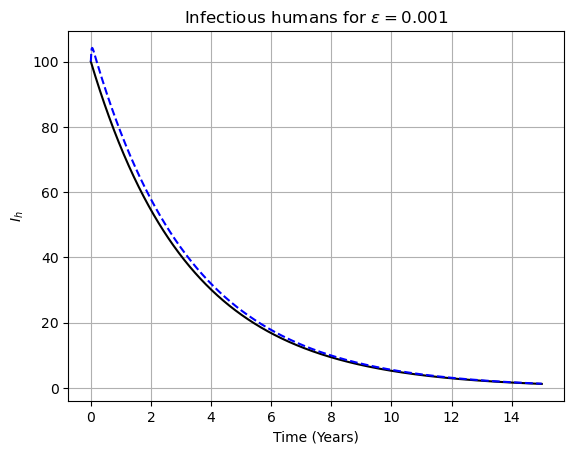}
\end{center}
\caption{Improvement of the approximation shown in Fig. \ref{Ih_case3_005} when $\epsilon =0.001$. The blue dotted line is the graph of the solutions $I_{h,\e}(t)$ of \eqref{c01main2} and the black solid line is the solution of the equation \eqref{ChEI}. }
\label{Ih_case3_001}
\end{figure}



\begin{figure}[!h]
\begin{center}
\includegraphics[scale=0.35]{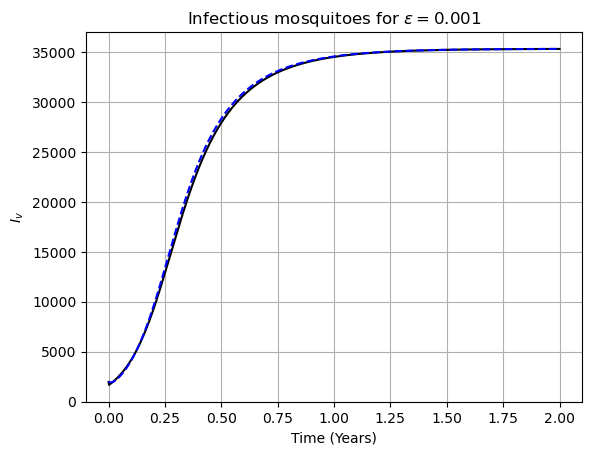}
\includegraphics[scale=0.35]{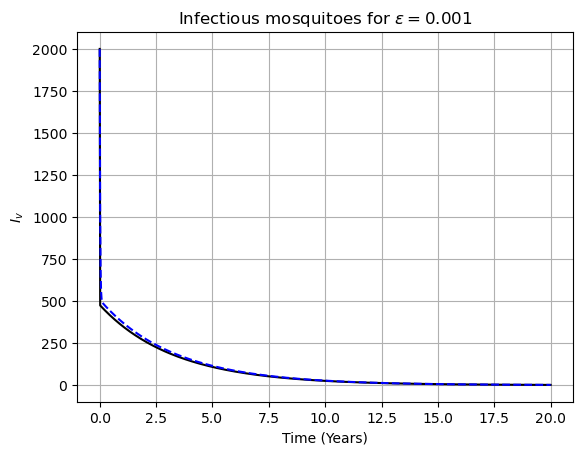}
\end{center}
\caption{These figures show in blue dotted line the graphs of the solutions $I_{v,\epsilon}(t)$ of \eqref{c01main2} and in black solid line the graphs of $\tilde{I}_{v,0}(\tau)+\bar{I}_{v,0}(t)+\e\bar{I}_{v,1}(t)$ given in corollary \ref{coro3.1}, where $\epsilon=0.001$.}
\label{Iv_case3_001}
\end{figure}



\subsection{The full model \eqref{mal1}}\label{ss2.3}

Here, we consider \eqref{mal1} with non-constant populations and disease-induced death. In the case of variable populations, the dependence of the biting rates on the sizes of respective populations becomes important. Hence, we shall select the explicit form of them, corresponding, according to \cite[Table 2.3]{Chitnis},  to a small, relatively to humans, vector population.  Then
\begin{equation*}
\begin{split}
    \la_h &= b_h(N_h,N_v)\beta_{hv}\frac{I_v}{N_v} = \sigma_v\beta_{hv} \frac{N_v}{N_h}\frac{I_v}{N_v}= :\beta_h\frac{I_v}{N_h}, \\ \hat\la_v &= \hat b_v(N_h,N_v)\beta_{vh}\frac{I_h}{N_h} =\hat\sigma_v \beta_{vh}\frac{I_h}{N_h}=: \hat\beta_v\frac{I_h}{N_h},
    \end{split}
    \end{equation*}
where $\sigma_v,\hat\sigma_v$ are the mosquitoes' biting rates. Hence,  \eqref{mal1a} takes the form \begin{equation}
\begin{split}
{N}'_h&= b_h( N_h) -\mu_{h} N_h  -\mu_dI_h,\\
{I}'_h&= \beta_h\frac{I_vS_h}{N_h} - (\gamma_h+ \mu_{h}+\mu_{d}) I_h, \\
{R}'_h &= \gamma_h I_h - (\rho_h +\mu_{h})R_h, \\
\e {N}'_v&= b_v(N_v)-\mu_{v}N_v,   \\
\e {I}'_v&=  \beta_v\frac{S_vI_h}{N_h} -  \mu_{v}I_v,\end{split}\label{mal1}
\end{equation}
so we effectively deal with a system of four equations, with the equation for the total vector population feeding into the system but not being affected by it.

We assume that the disease-free vector population equation
\begin{equation}
N_v'= b_v(N_v)-\mu_v N_v
\label{vpe}
\end{equation}
has a unique globally attracting positive hyperbolic equilibrium $N_v^*$ (or the populations is constant, that is, $N_v(t) \equiv N_v^*$ for all time).

Next, we  assume that the disease-free human population equation
\begin{equation}
N_h'= b_h(N_h)-\mu_h N_h
\label{hpe}
\end{equation}
has a unique positive hyperbolic equilibrium $N_h^*,$ which is globally stable on $\mbb R_+$, that is, in particular
\begin{equation}
    b_h'(N^*_h) -\mu_h<0.
    \label{bnh}
\end{equation}
As in the previous section,  the quasi-steady state is given as the solution of
\begin{equation}
\begin{split}
0&= b_v(N_v)-\mu_{v}N_v,  \\
0&= -  \mu_{v}I_v +  \beta_{v}\frac{I_h(N_v-I_v)}{N_h},\end{split}\label{mal2}
\end{equation}
from which we get the slow manifold $\mc M$ determined by the equation
\begin{equation}
\bar I_v = N_v^*\frac{\beta_{v}I_h}{\mu_v N_h +\beta_{v} I_h},
\label{qssm}
\end{equation}
where $N_v^*$ is also a unique equilibrium of the fast vector population equation \begin{equation}
\ti N_{v,\tau} = b_v(\ti N_v)-\mu_{v}\ti N_v.\label{Nv}
\end{equation}
We observe that, as in  \eqref{mcM}, $\bar I_v$ is of the Holling 2 form, though this time also $N_h$ is variable.

The Jacobi matrix of the fast part is given by
$$
\left(\begin{array}{cc} b_v'(N_v)-\mu_v&0\\
\beta_{v}\frac{I_h}{N_h}& -\mu_v-\beta_{v}\frac{I_h}{N_h}\end{array}\right)$$
and, since $N_v^*$ is an attractive equilibrium of \eqref{Nv}, see \eqref{vpe},  \eqref{auxil'} is satisfied. Note that in the case of constant $N_v(t) = N_v^*$, \eqref{Nv} is irrelevant, and the mosquito part of the system reduces to a single equation for $I_v,$ as in \eqref{c01main2}.

The reduced system \eqref{deg2} is given by \begin{equation}
\begin{split}
{\bar N}'_h&= b_h(\bar N_h) -\mu_{h} \bar N_h -\mu_d \bar I_h,\\
{\bar I}'_h&= \beta_{h}\beta_{v}N_v^*\frac{\bar N_h-\bar I_h-\bar R_h}{\bar N_h}\frac{\bar I_h}{\mu_v \bar N_h +\beta_{v} \bar I_h} - (\gamma_h+ \mu_{h}+\mu_{d}) \bar I_h, \\
{\bar R}'_h &= \gamma_h \bar I_h - (\rho_h +\mu_{h})\bar R_h. \end{split}\label{mal3}
\end{equation}
The standard analysis gives a  global unique solvability of \eqref{mal3} in the  admissible state space
\begin{equation}
\bar I_h\geq 0,\quad  \bar R_h\geq 0,\quad \bar I_h+\bar R_h\leq \bar N_h.
\label{adpar}
\end{equation}
It is also easy to see that the basic reproduction number for \eqref{mal3} is given by
\begin{equation}
\mc R_0 := \frac{\beta_h\beta_v N_v^*}{N^*_h\mu_v(\gamma_h+\mu_h+\mu_d)}.
\label{R0}
\end{equation}
The main of this section is to show that \eqref{mal1} satisfies the assumptions of Theorem \ref{thm1}.  More precisely, we shall show that assumption \eqref{A5}, that is, the exponential stability of the relevant equilibria of \eqref{mal3},  is satisfied at the disease-free equilibrium if $\mc R_0<1$ and at the endemic equilibrium if $\mc R_0>1$ (at least for low disease-induced death rates).  We shall do this in a series of lemmas.

\begin{lemma} $\bar I_h$ and $\bar R_h$ are bounded (irrespective of $\bar N_h$).\end{lemma}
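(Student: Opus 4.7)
The plan is straightforward: the bound on $\bar I_h$ must come from a uniform estimate on the infection rate that does not depend on $\bar N_h$, and then the bound on $\bar R_h$ follows by treating its equation as a linear ODE driven by $\bar I_h$.

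First, I would exploit the admissible state space \eqref{adpar}, which is positively invariant for \eqref{mal3} (a quick check of the boundary $\bar S_h := \bar N_h - \bar I_h - \bar R_h = 0$ shows the infection term vanishes there, while $(\bar N_h - \bar I_h - \bar R_h)' = b_h(\bar N_h) + \rho_h \bar R_h \geq 0$, and the boundaries $\bar I_h = 0$, $\bar R_h = 0$ are handled similarly). Hence $\bar N_h - \bar I_h - \bar R_h \geq 0$ throughout the evolution, so $\frac{\bar N_h - \bar I_h - \bar R_h}{\bar N_h} \leq 1$ whenever $\bar N_h > 0$. Combined with the elementary estimate
$$\frac{\bar I_h}{\mu_v \bar N_h + \beta_v \bar I_h} \leq \frac{1}{\beta_v},$$
the infection term in the $\bar I_h$-equation is bounded above by $\beta_h N_v^*$, uniformly in $\bar N_h$ (in the degenerate case $\bar N_h = 0$ one has $\bar I_h = \bar R_h = 0$ by admissibility, and the whole term vanishes by continuity).

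Consequently,
$$\bar I_h' \leq \beta_h N_v^* - (\gamma_h + \mu_h + \mu_d) \bar I_h,$$
and a standard comparison argument yields
$$\bar I_h(t) \leq \max\left\{\bar I_h(0),\, \frac{\beta_h N_v^*}{\gamma_h + \mu_h + \mu_d}\right\} =: M_I,$$
independently of $\bar N_h$. Plugging this into the linear equation for $\bar R_h$ gives $\bar R_h' \leq \gamma_h M_I - (\rho_h + \mu_h)\bar R_h$, whence
$$\bar R_h(t) \leq \max\left\{\bar R_h(0),\, \frac{\gamma_h M_I}{\rho_h + \mu_h}\right\},$$
again independent of $\bar N_h$.

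The only genuinely delicate point is the behaviour of the infection term when $\bar N_h$ approaches $0$, but admissibility forces $\bar I_h$ and $\bar R_h$ to vanish at the same rate or faster, so no singularity arises; everything else is a routine differential-inequality argument. I expect this to be the whole proof, with no deeper obstacles.
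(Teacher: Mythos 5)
Your proof is correct and follows essentially the same route as the paper: bound the incidence term by $\beta_h N_v^*$ via $\frac{\bar N_h-\bar I_h-\bar R_h}{\bar N_h}\leq 1$ and $\frac{\beta_v\bar I_h}{\mu_v\bar N_h+\beta_v\bar I_h}\leq 1$, then close with the differential inequality $\bar I_h'\leq \beta_h N_v^*-(\gamma_h+\mu_h+\mu_d)\bar I_h$. You additionally write out the routine comparison for $\bar R_h$ and the invariance of the admissible region, which the paper leaves implicit, but the core argument is identical.
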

\begin{proof} We have
$$
0\leq {\bar I}'_h\leq  \beta_{h}N_v^* - (\gamma_h+ \mu_{h}+\mu_{d}) \bar I_h,
$$
so that
$$
0\leq \bar I_h(t) \leq e^{- (\gamma_h+ \mu_{h}+\mu_{d})t}I_0 + \frac{\beta_{h}N_v^*}{(\gamma_h+ \mu_{h}+\mu_{d})}\left(1-e^{- (\gamma_h+ \mu_{h}+\mu_{d})t}\right)
$$
and we get $0\leq \limsup_{t\to\infty}\bar I_h(t) \leq \frac{\beta_{h}N_v^*}{\gamma_h+ \mu_{h}+\mu_{d}}$. \end{proof}
Since no mosquito variables will be used hereafter, we drop the overline symbol $\ov{\phantom{x}}$ and the subscript $h$ to simplify the notation in the sequel.
Let $ N_h^*>0$  be the globally stable equilibrium to \eqref{hpe}.
Then the disease-free equilibrium for \eqref{mal3} is given by
$$
\ov{DFE} = ( N_h^*, 0,0),
$$
(and coincides with the host part of the $DFE$ of the entire system \eqref{C01main-Mal01}).
The Jacobi matrix of \eqref{mal3} at $\ov{DFE}$ is given by
$$
\mc J_{\ov{DFE}} = \left(\begin{array}{ccc} b_h'( N^*_h) - \mu_h&-\mu_d&0\\
0&\frac{\beta_h\beta_v N_v^*}{\mu_v  N^*_h}-(\gamma_h +\mu)&0\\
0&\gamma_h&-(\gamma_h+\mu_h)\end{array}\right),
$$
where we denoted  $\mu =\mu_h+\mu_d$. Thus $\ov{DFE}$ is locally exponentially stable if \eqref{bnh} is satisfied and
\begin{equation}
 N^*_h > \frac{\beta_h\beta_v N_v^*}{\mu_v(\gamma_h+\mu)}\;\Leftrightarrow\; \mc R_0  <1.
\label{ls}\end{equation}
We shall prove that it is also globally stable.  First, we observe that \eqref{mal3} has the following isoclines:
\begin{align}
N'&= 0\;\; \textrm{if}\;\; I = F(N) := \frac{1}{\mu_d}b_h(N) -\frac{\mu_{h}}{\mu_d} N,\nn\\
{I}'&=0 \;\; \textrm{if}\;\;I = G(N,R):= \frac{\beta_{h}\beta_{v}N_v^*N-\mu_v(\gamma_h+\mu) N^2-\beta_{h}\beta_{v}N_v^*R}{\beta_{h}\beta_{v}N_v^*+ \beta_v (\gamma_h+\mu)N}\nn\\& \;\;\;\;\;\;\;\;\textrm{or}\; I =0, \nn\\
{R}'&= 0 \;\; \textrm{if}\;\; I = H(R) := \frac{\rho_h +\mu_{h}}{\gamma_h}R, \label{mal3'}
\end{align}
which are presented in Fig. \ref{Fig 1n}.
  \begin{figure}
\begin{center}
\begin{tikzpicture}[scale=1]
\draw [thick,->](-6,0)--(3.5,0);
\draw [dashed,-] (-4,0)--(-7,3.75);
\draw [dashed,-] (3.5,0)--(0.5,3.75);
\draw [dashed,-] (-7,3.75)--(0.5,3.75);
\draw [thick,->](-4,-1)--(-4,5);
\draw [thick,->] (-4,0)--(-7,-3);
\draw (-7,-2.5) node {\tiny{$R$}};
\draw (-3,3.5) node {\tiny{$I = H(R)$}};
\draw [-](-4,0)--(2,4.9);
\draw [-](-4,0)--(-3.1,-3);
\draw  [thick,dotted,-] (-4,0) .. controls  (-1.5,4) .. (2,0);
\draw  [thick,dotted,-] (-7,-3) .. controls  (-4.5,1) .. (-1,-3);
\draw [thick,dotted,-] (2,0)--(-1,-3);
\draw [thick,dotted,-] (-1.5,3)--(-4.5,0);
\draw  [dash pattern=on 1pt off 2pt on 3pt off 2pt] (-4,0) .. controls  (-3.5,-1.5) .. (0,0);
\draw [thick, dashed,-] (-4,0).. controls (-2.2,0.7)..(0,0);
\draw [thick, dashed,-] (-4,0)..controls (-2.27,3.31)..(2,0);
\draw [dash pattern=on 1pt off 2pt on 3pt off 2pt] (-4,0) to [out=45,in = 135] (0,0);
\draw [dash pattern=on 1pt off 2pt on 3pt off 2pt] (-2,0.82) -- (-3.2,-1.12);
\draw (3.5,-0.5) node {\tiny{$N$}};
\draw (-4.5,5) node {\tiny{$I$}};
\draw (1.5,5) node {\tiny{$I=N+R$}};
\draw (1,2) node {\tiny{$I=F(N)$}};
\draw (-2,1) node {\tiny{$I=G(N,R)$}};
\draw (0.55,-0.3) node {\tiny{$\frac{\beta_v\beta_hN_v^*}{\mu_v(\gamma_h+\mu)}$}};
\draw [fill]  (0,0) circle [radius=0.05];
\draw (2,-0.5) node {\tiny{$ N_h^*$}};
\draw [fill]  (2,0) circle [radius=0.05];
\draw [->] (0,2.2)--(0,1.95);
\draw [->] (1,1.12)--(1,0.87);
\draw [->] (0.5,1.665)--(0.5,1.415);
\draw [->] (1.5,0.55)--(1.5,0.3);
\draw [->] (-0.5,2.62)--(-0.5,2.37);
\draw [->] (-0.5,0.4)--(-0.25,0.4);
\draw [->] (-1,0.65)--(-0.75,0.65);
\draw [->] (-1.5,0.795)--(-1.25,0.795);
\draw [->] (-3.5,0.4)--(-3.25,0.4);
\draw (2.5,3) node {\tiny{\begin{tabular}{c}{$ N'<0$}\\{$I'<0$}\\{$R'>0$}\end{tabular}}};
\draw (2.5,-1.5) node {\tiny{\begin{tabular}{c}{$ N'<0$}\\{$I'<0$}\\{$R'<0$}\end{tabular}}};
\draw (-0.6,1.4) node {\tiny{\begin{tabular}{c}{$ N'>0$}\\{$I'<0$}\\{$R'<0$}\end{tabular}}};
\draw (-2.1,-0.15) node {\tiny{\begin{tabular}{c}{$ N'>0$}\\{$I'>0$}\\{$R'<0$}\end{tabular}}};
\end{tikzpicture}
\end{center}
\caption{Isoclines of \eqref{mal3'} with \eqref{ls} satisfied. The dotted tent-like surface corresponds to $I=F(N)$, the dashed plane corresponds to $I=H(R)$ and the dash-dotted surface corresponds to $I=G(N,R)$. The thick dashed curves show the intersections of $I=H(R)$ and $I=F(N)$ (top), and $I=H(R)$ and $I=G(N,R).$ The admissible part \eqref{adpar} is under the plane $I=N+R$.}
 \label{Fig 1n}
\end{figure}
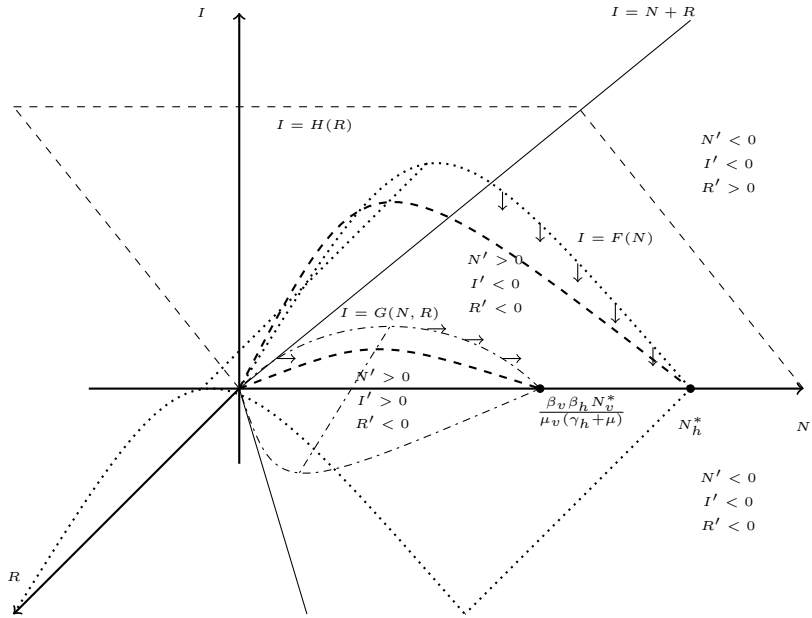Next, we make several obvious observations.
\begin{description}
\item \textit{Observation 1.} For $R\geq 0$,
$$G(N, R)\leq G_0(N):=  N-\frac{\mu_v(\gamma_h+\mu)}{\beta_{h}\beta_{v}N_v^*}  N^2.$$
Thus,  $I= F(N)$ and $I= G(N,R)$ intersect  only if there is an intersection of the curves $I= F(N)$ and $I= G_0(N)$ in the $(I,N)$ plane.
\item \textit{Observation 2.}  $N=0$ is a solution of $F(N) = G_0(N)$. Further,
\begin{align*}
G'_{,N}(0,0) &= G'_{0,N}(0)=1,\\
G_0(N)=0 &\;\;\textsf{if}\;\; N=0\;\;\textsf{or}\;\; N^*:= \frac{\beta_h\beta_v N_v^*}{\mu_v(\gamma_h+\mu)}.
\end{align*}
\end{description}
Next, we have to make another assumption concerning demography. Namely, in addition to \eqref{ls}, we assume
\begin{equation}
F(N)>G_0(N)\quad \textrm{for} \quad 0<N< \frac{\beta_h\beta_v N_v^*}{\mu_v(\gamma_h+\mu)}.
\label{Bass}
\end{equation}
\begin{remark}For the logistic birth rate,  $b_h(N) = rN\left(1-\frac{N}{K}\right), r>0,K>0,$  assumption \eqref{Bass} is satisfied if (and only if)
\begin{equation}
\theta :=\frac{r-\mu_h}{\mu_d} \in (0,1).
\label{log1}
\end{equation} Indeed, in this case,  the equilibrium is given by
$$
 N^*_h = \frac{K(r-\mu_h)}{r},
$$
so that $r-\mu_h>0$ by \eqref{ls}, hence $\theta$ must be positive. The nonzero intersection of $I=F(N)$ and $I=G_0(N)$ can be obtained by solving
$$
\theta-1 = N\left(\frac{\theta}{ N^*_h}-\frac{1}{N^*}\right),
$$
that is, it is given by
$$
N_0 = \frac{(\theta - 1) N^*_hN^*}{\theta N^*- N^*_h}, \quad \theta \neq  \frac{N^*_h}{N^*}.
$$
We see that $\theta =1$ gives additional solution $N_0=0$, while  $\theta =  N^*_h/N^*$ yields no solution.  Then $0<N_0< N^*_h$ if and only if
$$
0<
\frac{(\theta - 1)N^*}{\theta N^*- N^*_h} = \frac{(\theta - 1)N^*}{(\theta -1)N^* +N^*- N^*_h}
<1.
$$
If $0<\theta <1$, then $0<N_0< N^*_h,$ as, by \eqref{ls}, we have $N_h^*>N^*$ and hence
$$
\frac{(\theta - 1)N^*}{(\theta -1)N^* +N^*- N^*_h}=\frac{(1-\theta)N^*}{(1-\theta)N^* +N^*_h-N^*}<1. $$ If $\theta>1$,  $N_0 < 0$ if $\theta< N^*_h/N^*$ and $N_0> N^*_h$ if $\theta >  N^*_h/N^*$.\end{remark}
\begin{lemma} The admissible part, see \eqref{adpar}, of the `dotted' tunnel,
$$\Omega_0 = \{(N,I,R);\; I\geq 0, R\geq 0, I+R\leq N, I\leq F(N)\}$$
is invariant under the flow of \eqref{mal3}.\label{lem42} \end{lemma}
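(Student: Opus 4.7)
The plan is to verify, by the standard Nagumo-type characterisation of positive invariance, that on each bounding face of $\Omega_0$ the vector field of \eqref{mal3} is either tangent to or points into the interior of $\Omega_0$. Since $\Omega_0$ is cut out by the four inequalities $I\ge 0$, $R\ge 0$, $N-I-R\ge 0$ and $I\le F(N)$, there are four faces to inspect.

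Three of them are immediate. On $\{I=0\}$, the $I$-equation gives $I'=0$, since $I$ factors out of its right-hand side. On $\{R=0\}$, one has $R'=\gamma_h I\ge 0$. On $\{N-I-R=0\}$, a direct computation of $N'-I'-R'$ followed by substitution of $N-I-R=0$ yields
\begin{equation*}
(N-I-R)'=b_h(N)+\rho_h R\ge 0,
\end{equation*}
so trajectories are pushed back into $\{N-I-R\ge 0\}$.

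The substantive face is $\{I=F(N)\}$. Here the crucial observation is that, by the very definition of $F$,
\begin{equation*}
N'=b_h(N)-\mu_h N-\mu_d I=\mu_d\bigl(F(N)-I\bigr),
\end{equation*}
so $N'$ vanishes identically on this face. Consequently $(I-F(N))'=I'-F'(N)N'=I'$, and it suffices to prove $I'\le 0$ whenever $I=F(N)$. Factoring $I$ out of the right-hand side of the $I$-equation shows that, for $I\ge 0$, the inequality $I'\le 0$ is equivalent to $I\ge G(N,R)$, with $G$ the isocline from \eqref{mal3'}. Thus the task reduces to proving $F(N)\ge G(N,R)$ on the projection of this face onto the $(N,R)$-plane.

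This last step is the main obstacle, and it is exactly where Observations~1--2 and assumption \eqref{Bass} come in. By Observation~1 it suffices to show $F(N)\ge G_0(N)$ on the relevant range of $N$. The constraint $I\le F(N)$ combined with $I\ge 0$ forces $F(N)\ge 0$, so the admissible part of the face lies in $N\in[0,N_h^*]$. For $0<N<N^*:=\beta_h\beta_v N_v^*/(\mu_v(\gamma_h+\mu))$, the required inequality is precisely \eqref{Bass}; for $N^*\le N\le N_h^*$, Observation~2 yields $G_0(N)\le 0$ (since $G_0$ is a downward parabola vanishing at $0$ and $N^*$), while $F(N)\ge 0$ throughout this range, so the inequality holds trivially. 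This closes the argument.
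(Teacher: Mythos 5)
Your proof is correct and follows essentially the same route as the paper's: the decisive step is that on the face $I=F(N)$ one has $N'=0$ and $I'\le 0$ because $F\ge G_0\ge G$ by Observations 1--2 and assumption \eqref{Bass}, which is exactly the content of the paper's one-line argument. You additionally verify the three faces of the admissible region \eqref{adpar}, which the paper subsumes under its earlier ``standard analysis'' of solvability in that state space.
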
 \begin{proof}Indeed, on the surface $N'= 0$, that is, $\{(N,I,R);\;I=F(N), I,R\geq 0\},$ we have $I'<0$ unless $N=R=I=0$ (and thus the field points inward), and the origin is repelling. \end{proof}
\begin{lemma} No trajectory can pass from the admissible part of $R'<0$ to $R'> 0$, that is, across $\{(N,I,R);\;I= H(R)\},$ outside the 'dash-dotted' region $I> G(N,R)$. \label{lem43}\end{lemma}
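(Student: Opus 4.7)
The claim is a transversality statement on the surface $\Sigma:=\{I=H(R)\}$, so I would analyse the sign of the signed-distance function $\phi(t):=I(t)-H(R(t))$ along trajectories of \eqref{mal3}. By construction $\{\phi>0\}=\{R'>0\}$, $\{\phi<0\}=\{R'<0\}$ and $\Sigma=\{\phi=0\}$, so a trajectory passes from $R'<0$ to $R'>0$ exactly when it crosses $\Sigma$ transversally in the direction of increasing $\phi$. Because $H$ is linear with $H'(R)=(\rho_h+\mu_h)/\gamma_h$ and $R'=\gamma_h\phi$,
\begin{equation*}
\phi'=I'-\tfrac{\rho_h+\mu_h}{\gamma_h}R'=I'-(\rho_h+\mu_h)\phi,
\end{equation*}
which on $\Sigma$ collapses to $\phi'|_{\phi=0}=I'$. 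Consequently the direction of every transversal crossing of $\Sigma$ is dictated by the sign of $I'$ at the crossing point, and ruling out upward crossings reduces to ruling out $I'>0$ in the complement of $\{I>G(N,R)\}$.

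The second ingredient is identifying the sign of $I'$ via the isocline $I=G(N,R)$. For $I>0$, write $I'=I\cdot h(I;N,R)$ with
\begin{equation*}
h(I;N,R):=\frac{\beta_h\beta_v N_v^*(N-I-R)}{N(\mu_v N+\beta_v I)}-(\gamma_h+\mu);
\end{equation*}
a short differentiation shows $\partial_I h<0$ throughout the admissible region \eqref{adpar} (using $N-I-R\ge 0$ and $N>0$), and by the definition of $G$ in \eqref{mal3'} one has $h(G(N,R);N,R)=0$. The monotonicity then gives a clean dichotomy on the admissible part of $\Sigma$: the sign of $I'$, and hence of $\phi'|_\Sigma$, is opposite to the sign of $I-G(N,R)$. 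Substituting into the reduction of the previous paragraph partitions $\Sigma$ into the subset where transversal crossings go in one direction and the subset where they go in the other, which is precisely the content of the lemma.

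The main technical obstacle is the locus $\Sigma\cap\{I=G(N,R)\}$ (the lower thick dashed curve in Fig.~\ref{Fig 1n}), on which $\phi=\phi'=0$ simultaneously and the first-order transversality argument is inconclusive. I would dispatch this by differentiating once more: along a trajectory that is tangent to $\Sigma$ at such a point one has $\phi''|_{\phi=\phi'=0}=I''$, and since $I'=h=0$ and $R'=0$ there, the total derivative of $h$ along the orbit reduces to $h'=\partial_N h\cdot N'$, yielding $I''=I(\partial_N h)N'$. Inspecting the sign of this expression on the admissible part of the tangency curve (together with the first-order verdict on each side of it) shows that $\phi$ does not change sign through a tangent contact, so no spurious upward crossing is hidden in this codimension-two set. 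With that check in place, the transversal analysis of the first two paragraphs is exhaustive and the lemma follows.
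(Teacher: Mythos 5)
Your argument is correct and is essentially the paper's own proof: the paper computes the inner product of the field $(N',I',0)$ with the normal $\left(0,1,-\tfrac{\rho_h+\mu_h}{\gamma_h}\right)$ to $\{I=H(R)\}$, which is exactly your identity $\phi'|_{\Sigma}=I'$, and then uses that $I'<0$ on the relevant part of the surface; your monotonicity argument for $h$ merely supplies the justification of the sign of $I'$ that the paper leaves implicit. The tangency analysis in your final paragraph is superfluous for the statement as given, since the lemma only concerns crossings at points where $I>G(N,R)$ holds strictly, where the first-order transversality argument is already conclusive.
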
\begin{proof}
Indeed, the normal on $I=H(R)$ pointing towards the $R'>0$ is given by $\left(0,1,-\frac{\rho_h+\mu_h}{\gamma_h}\right)$ and the trajectory's direction at the boundary is $(N',I',0)$ with $I'<0$. \end{proof}
\begin{lemma} Any trajectory originating in $\Omega_0\setminus \{0,0,0\}$ converges to $( N^*_h,0,0)$. \label{lem44}\end{lemma}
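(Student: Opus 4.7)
The plan is to show that in $\Omega_0$ the variable $N$ is monotone and bounded, then analyse the $\omega$-limit set of a trajectory using the structure of equilibria provided by the isoclines in \eqref{mal3'}.

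First I would note that, by the definition of $F$,
\[
N' = b_h(N)-\mu_h N-\mu_d I = \mu_d\bigl(F(N)-I\bigr)\geq 0 \quad \text{throughout } \Omega_0,
\]
with equality iff $I=F(N)$. Hence $N(t)$ is nondecreasing along any trajectory in $\Omega_0$. Since $I\leq F(N)$ and $I\geq 0$ force $F(N)\geq 0$, and $N_h^*$ is the unique positive zero of $b_h(N)-\mu_h N$ (being globally stable for \eqref{hpe}), one has $N(t)\leq N_h^*$. Thus $N(t)\nearrow N_\infty$ for some $N_\infty\in(0,N_h^*]$ whenever $N(0)>0$, and in general the trajectory is bounded in $\Omega_0$.

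Next I would invoke LaSalle-type reasoning. The $\omega$-limit set $\omega$ of the trajectory is a nonempty, compact, connected, invariant subset of $\Omega_0\cap\{N=N_\infty\}$. Any point of $\omega$ generates a trajectory contained in $\omega$, hence satisfying $N(t)\equiv N_\infty$, which in turn forces $I(t)\equiv F(N_\infty)=:I_\infty$ on $\omega$. If $I_\infty>0$, then $I'\equiv 0$ and the $I$-equation in \eqref{mal3} becomes an affine equation in $R$, whence $R$ is constant on $\omega$, say $R=R_\infty$, and $(N_\infty,I_\infty,R_\infty)$ is an endemic equilibrium satisfying simultaneously $I_\infty=F(N_\infty)=G(N_\infty,R_\infty)$. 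But by Observation~1 and assumption \eqref{Bass}, $F(N)>G_0(N)\geq G(N,R)$ for every $0<N<N^*$, while for $N\geq N^*$ we have $G_0(N)\leq 0$ and $G(N,R)\leq G_0(N)$ so $G(N,R)\leq 0<F(N)$ (using $N^*<N_h^*$ from \eqref{ls}). Thus no such endemic equilibrium exists in $\Omega_0$, and we must have $I_\infty=0$.

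With $I\equiv 0$ on $\omega$, the $R$-equation reduces to $R'=-(\rho_h+\mu_h)R$, and invariance of $\omega$ (which implies $R$ stays bounded also backward in time) forces $R\equiv 0$ on $\omega$. Therefore $\omega$ is a connected invariant set consisting of equilibria of the form $(N,0,0)$ with $b_h(N)=\mu_h N$; being connected, it is a single point, either $(0,0,0)$ or $(N_h^*,0,0)$. The origin is excluded as soon as $N(0)>0$, because monotonicity of $N$ gives $N_\infty\geq N(0)>0$; and every point of $\Omega_0\setminus\{(0,0,0)\}$ has $N(0)>0$ (since $I+R\leq N$ forces $I(0)=R(0)=0$ when $N(0)=0$). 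Consequently $\omega=\{(N_h^*,0,0)\}$, proving the claim.

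The step I expect to be the main obstacle is the rigorous exclusion of an endemic equilibrium in the range $N\in(0,N^*)$: observation~1 only provides $G(N,R)\leq G_0(N)$ and one must combine it carefully with \eqref{Bass} to rule out $F=G$ in the full admissible region, making sure the sign considerations on both sides of $N=N^*$ are covered and that one does not accidentally exclude $I_\infty=0$.
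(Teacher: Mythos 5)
Your proof is correct and rests on the same underlying mechanism as the paper's---monotonicity of $N$ in $\Omega_0$ forces convergence, and the limit must be an equilibrium---but the technical route differs. The paper argues directly along the trajectory: boundedness of the orbit gives uniform continuity of $(N',I',R')$, so Barbalat's lemma (\cite[Corollary 2.1]{Mar}) yields $N'\to 0$, from which $I$ and then $R$ are successively shown to converge, and the limit is identified as the DFE by the terse remark that ``there are no other attracting equilibria.'' You instead pass to the $\omega$-limit set and exploit its invariance (a LaSalle-type argument), which buys you two things the paper glosses over: (i) you explicitly rule out an endemic equilibrium inside $\Omega_0$ by combining Observation 1 with \eqref{Bass} and the sign of $G_0$ for $N\geq N^*$, rather than appealing to attractivity (a trajectory can in principle converge to a non-attracting equilibrium along its stable manifold, so the paper's phrasing is weaker than what is actually needed); and (ii) you handle the case $I_\infty=0$ cleanly via the $R$-equation and backward boundedness on $\omega$, whereas the paper's deduction of $\lim_{t\to\infty} R(t)$ from the $I$-equation is delicate precisely when $\bar I=0$, since the coefficient of $R$ in that equation then vanishes and one must fall back on the third equation. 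The only cosmetic omission on your side is that you should cite Lemma \ref{lem42} explicitly for the forward invariance of $\Omega_0$, which underlies your assertion that the trajectory remains in $\Omega_0$ for all time.
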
 \begin{proof} First, we observe that $\Omega_0$ is invariant by Lemma \ref{lem42}, and bounded thanks to $N\geq I+R$. Thus the trajectories $t\mapsto ( N(t), I(t), R(t))$ are bounded. Hence, $$t\mapsto (F( N(t),  I(t)), G(N(t), I(t),  R(t)), H(I(t),  R(t)))$$
is bounded (note that $0\leq \frac{ N- I- R}{N} \leq 1$) and thus
$
t\mapsto ( N'(t),  I'(t),  R'(t))$
is uniformly continuous. For any trajectory staying in $\Omega_0$ we have $ N'>0,$ so there exists $$\lim_{t\to \infty}  N(t) =: \bar N.$$ But then, due to the uniform continuity of $ N'$, see \cite[Corollary 2.1]{Mar},  $$\lim_{t\to \infty} N'(t)=0.$$ Then the first equation of \eqref{mal3}  implies $$\lim_{t\to \infty} I(t) = :\bar I$$
 and, using again \cite[Corollary 2.1]{Mar}, $$ \lim_{t\to \infty} I'(t)=0.$$
In the same way, but using the second equation of \eqref{mal3}, we get
$$\lim_{t\to \infty}  R(t) =: \bar R\quad\textrm{ and}\quad  \lim_{t\to \infty} R'(t)=0.$$
Thus, $(\bar N,\bar I,\bar R)$ is an equilibrium. Since there are no other attracting equilibria,
$$
(\bar N, \bar I, \bar R) = (\bar N^*_h,0,0).
$$
\end{proof}
\begin{lemma} Any trajectory originating in the region $$\Omega_1= \{(N,I,R);\; I\geq 0, R\geq 0, I+R\leq N, I> F(N)\}$$
converges to $( N^*_h,0,0)$. \end{lemma}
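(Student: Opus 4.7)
The plan is to split the trajectories starting in $\Omega_1$ into two scenarios, according to whether they eventually cross the common boundary $\{I=F(N)\}$ into $\Omega_0$ in finite time or remain in $\Omega_1$ forever. In the first scenario Lemma \ref{lem42} ensures that $\Omega_0$ is trapping and Lemma \ref{lem44} then delivers convergence to $(N_h^*,0,0)$, so the only real work concerns the second scenario.

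First I would record the key sign observation: in the interior of $\Omega_1$ one has $I>F(N)$, whence the first equation of \eqref{mal3} gives
$$
N' = b_h(N) - \mu_h N - \mu_d I = \mu_d\bigl(F(N)-I\bigr) < 0,
$$
so $N$ is strictly decreasing along any trajectory while it remains in $\Omega_1$. A short boundary check then shows that $\{I=F(N)\}$ is the only face of $\overline{\Omega_1}$ (within the admissible set) through which the flow can exit: on $I=0$ one has $I'=0$; on $R=0$ one has $R'=\gamma_h I\geq 0$; and on the plane $N=I+R$ a direct computation of $(N-I-R)'$ at $N=I+R$ reduces to $b_h(N)+\rho_h R\geq 0$. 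Hence, outside the crossing into $\Omega_0$, the trajectory never leaves $\Omega_1$.

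For the second scenario I would mimic the Barbalat-style argument used in the proof of Lemma \ref{lem44}. Boundedness of the trajectory (the first lemma of the section plus admissibility) makes $(N',I',R')$ Lipschitz and hence uniformly continuous. Monotonicity and lower-boundedness of $N$ yield $N(t)\to \bar N$, and then $N'(t)\to 0$ by \cite[Corollary 2.1]{Mar}; combined with $N'=\mu_d(F(N)-I)$ this forces $I(t)\to F(\bar N)=:\bar I$, after which the same argument applied to the remaining equations gives $I'(t)\to 0$, $R(t)\to \bar R$ and $R'(t)\to 0$. Thus $(\bar N,\bar I,\bar R)$ is an equilibrium of \eqref{mal3}; under \eqref{ls} the only attracting equilibrium is $\ov{DFE}=(N_h^*,0,0)$, while the origin, the only other candidate, is repelling as noted in Lemma \ref{lem42}.

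The main obstacle is making the second scenario rigorous rather than cosmetic: $R$ is not a priori monotone, so its limit has to be extracted via a Barbalat step after those for $N$ and $I$, and one must rule out the origin as a possible $\omega$-limit point of a trajectory that approaches the boundary $\{I=F(N)\}$ from $\Omega_1$. Both issues are handled in the same way as in Lemma \ref{lem44}, by sequencing the uniform-continuity arguments so that each invocation of \cite[Corollary 2.1]{Mar} is justified by a limit already established in the previous step, and by invoking the repelling character of the origin together with the strict decrease of $N$ in the interior of $\Omega_1$.
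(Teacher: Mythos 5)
Your argument is correct, but it follows a genuinely different route from the paper's. The paper splits $\Omega_1$ along the isocline $I=H(R)$ into $\Omega_{1a}$ (where $R'\geq 0$) and $\Omega_{1b}$ (where $R'<0$): in $\Omega_{1a}$ the monotone growth of $R$ rules out convergence to the $\ov{DFE}$, forcing an exit either through $I=F(N)$ (reducing to Lemma \ref{lem44}) or into $\Omega_{1b}$; Lemma \ref{lem43} then precludes a return to $\Omega_{1a}$, and in $\Omega_{1b}$ all three of $N,I,R$ decrease, so monotone convergence alone identifies the limit as an equilibrium. You instead exploit the single identity $N'=\mu_d\bigl(F(N)-I\bigr)<0$ on all of $\Omega_1$ and rerun the Barbalat-type scheme of Lemma \ref{lem44} ($N\to\bar N$, then $N'\to 0$ via \cite[Corollary 2.1]{Mar}, then $I\to F(\bar N)$, then $I'\to 0$, then $R\to\bar R$). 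This buys you a shorter proof that needs neither the sub-decomposition nor Lemma \ref{lem43} at all, and is uniform in style with Lemma \ref{lem44}; the price is that you rely on the uniform-continuity machinery where the paper's final sub-case gets by with elementary monotone convergence in all three coordinates. Two small points to tighten: identifying the limit equilibrium as $(N_h^*,0,0)$ uses not only \eqref{ls} but also \eqref{Bass} (which excludes an endemic equilibrium via Observations 1 and 2), and the step producing $\bar R$ should be routed through the third (linear) equation of \eqref{mal3} when $\bar I=0$, since the second equation is then vacuous --- though the paper's own proof of Lemma \ref{lem44} is no more explicit on either point.
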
 \begin{proof} Consider first a trajectory starting in $\Omega_{1a}=\{(N,I,R);\; I\geq 0, R\geq 0, I+R\leq N, I> F(N), I\geq H(R)\}$. The solution is bounded if it stays in $\Omega_{1a}$. Hence, it must have a limit. The only possible limit is at DFE, but this is impossible as $0\leq  R(t)$ is increasing in $\Omega_{1a}$, while  $\lim_{t\to \infty}  R(t) = 0.$ Thus the trajectory must leave $\Omega_{1a}$. If it leaves through $I=F(N)$, we have the situation described in Lemma \ref{lem44}. Otherwise, the trajectory enters
$$\Omega_{1b}= \{(N,I,R);\; I\geq 0, R\geq 0, I+R\leq N, I> F(N), I<H(R)\}$$
and, by Lemma \ref{lem43}, it can cross back to $\Omega_{1a}$. Hence, it can either stay in $\Omega_{1a}$ or again cross $I=F(N)$ to enter   $\Omega_0$. The only new case is the former,  but then each $N(t), I(t)$ and $R(t)$ remain decreasing and bounded from below and hence convergent to the DFE (which does not lead to contradiction if $N(0) (=N_h(0)) >  \bar N^*$.)\end{proof}
\subsection{The asymptotics in the endemic regime}
We consider the approximation in the endemic regime for low disease-induced death rates, that is, for the following version of \eqref{mal1}
\begin{equation}
\begin{split}
{N}'_h&= b_h( N_h) -\mu_{h} N_h  -\e\mu'_dI_h,\\
{I}'_h&= \beta_{h}\frac{I_vS_h}{N_h} - (\gamma_h+ \mu_{h}+\e\mu'_{d}) I_h, \\
{R}'_h &= \gamma_h I_h - (\rho_h +\mu_{h})R_h, \\
\e {N}'_v&= b_v(N_v)-\mu_{v}N_v,   \\
\e {I}'_v&=  \beta_{v}\frac{I_hS_v}{N_h} -  \mu_{v}I_v,\end{split}\label{mal1e}
\end{equation}
where, similarly as in \eqref{mal1}, we normalized $\mu'_d = 10^3\mu_d $ so that $\e\mu_d' =\mu_d$ for $\e = 10^{-3}.$
As the vector part of the system \eqref{mal1e} is the same as in \eqref{mal1}, the reduced model is given by
\begin{subequations}
\begin{equation}
\hspace{-1cm} { N}'= b_h( N) -\mu_{h}  N,\label{m1}
\end{equation}
and
\begin{equation}
\begin{split}
{ I}'&= \beta_{v}\beta_{h}N_v^*\frac{ N- I- R}{ N}\frac{ I}{\mu_v  N +\beta_{v}  I} - (\gamma_h+ \mu_{h})  I =:f(I,R), \\
{ R}' &= \gamma_h  I - (\rho_h +\mu_{h}) R =:g(I,R),
\end{split}
\label{m2}
\end{equation}
\label{mal3a}
\end{subequations}
where, as before, we dropped the index $h$ and the bars $\ov{\phantom{x}}$ in the reduced equation.
To get endemic equilibria of \eqref{mal3}, we first observe that the population's endemic equilibrium $N^*$, solving \eqref{m1},  is the same as in the $\ov{DFE}$ and then we immediately get
\begin{equation}
R^* = \frac{\gamma_h}{\rho_h+\mu_h} I^*.
\label{rs}
\end{equation}
Then, the nontrivial $I^*$ is given by
\begin{equation}
I^* = N^*(\mu_h+\rho_h)\frac{\beta_v\beta_h N_v^* - \mu_v(\mu_d +\gamma_h) N^*}{\beta_v\beta_h N_v^*(\mu_h+\gamma_h+\rho_h) +\beta_h(\mu_h+\rho_h)(\mu_h+\gamma_h)N^*},\label{is}
\end{equation}
and we see that $I^*$ becomes positive if $\mc R_0>0$, see \eqref{ls}.

To show that also for $\mc R_0>1$, the dynamics of \eqref{mal1} for small values of $\mu_d$ can be approximated uniformly on $[0,\infty)$ by the solution to \eqref{mal3a} for all initial conditions, we establish the global stability of $(N^*,I^*,R^*)$. We see that \eqref{m1}  does not depend on \eqref{m2}, so we can use the Vidyasagar theorem, see, e.g., \cite[Section 5.8.4]{Vid} or \cite[Twierdzenie 5.12]{BanSz}. For this, we consider  \eqref{m2} with $N$ replaced by the fixed equilibrium $N^*$ of \eqref{m1}. Then, at any equilibrium point $(I^*,R^*)$, the Jacobi matrix of \eqref{m2}  is given by $$
\mc J_{(I^*,R^*)} = \left(\begin{array}{cc}
-\frac{\beta_vI^*}{\mu_vN^*+\beta_vI^*}\left(\frac{\beta_hN^*}{N^*}+(\gamma_h +\mu_h)\right)&-I^*\frac{\beta_h\beta_v N_v^*}{N^*(\mu_v  N^*_h+\beta_v I^*)}\\
\gamma_h&-(\rho_h+\mu_h)\end{array}\right).
$$
If $\mc R_0>1$, we have a unique $I^*>0$ and the trace of $\mc J_{(I^*,R^*)}$ is negative, while its determinant is positive, thus $(I^*,R^*)$ is locally asymptotically stable. Next, consider the Dulac function
$$
\phi(I,R) = \frac{\mu_v  N^*_h +\beta_{v}  I}{ I},
$$
well-defined and differentiable in the open first quadrant $\mbb R^2_+$.

 Then we have
\begin{align*}
&\frac{\p}{\p I}(\phi(T,R)f(I,R)) + \frac{\p}{\p R}(\phi(T,R)f(I,R))\\
 &\phantom{x}= \frac{\p}{\p I}\left(\beta_{v}\beta_{h}N_v^*\frac{ N^*- I- R}{ N^*_h} - (\gamma_h+ \mu_{h})(\mu_v  N^* +\alpha_{h}  I)\right)\\
  &\phantom{xx}+ \frac{\p}{\p R} \left(\gamma_h (\mu_v  N^*_h +\beta_{v}  I) - (\rho_h +\mu_{h}) R \frac{\mu_v  N^*_h +\beta_{h}  I}{ I}\right)\\
  &\phantom{x} = - \frac{\beta_{v}\beta_{h}N_v^*}{N^*}-(\gamma_h+ \mu_{h})\beta_{v} -  (\rho_h +\mu_{h}) \frac{\mu_v  N^*_h +\beta_{v}  I}{ I}<0,
\end{align*}
whenever $I>0$, thus there are no closed orbits in $\mbb R^2_+$. For no point $(\mr I,\mr R)\in \mbb R^2_+$ its $\omega$-limit set intersects the semi-axes $\{I=0, R>0\}$ and $\{R=0,I>0\}$. Indeed, in the latter case, the field points inside $\mbb R_+^2$, while in the former case, the whole semi-axis is a trajectory with its $\omega$-limit point being $(0,0)$, which, by invariance, would belong to $\omega(\mr I,\mr R)$. Since, $I'>0$ in the positive vicinity of $(0,0)$, so no internal trajectory can approach $\{I=0, R>0\}$ close to $(0,0)$. Hence this part, belonging to $\omega(\mr I,\mr R)$, must be a part of the internal trajectory. This, however, would violate the uniqueness.   Since the trajectories are bounded, their $\omega$-limit sets are compact and contained in $\mbb R_+^2$. Hence, using the Poincar\'{e}--Bendixon trichotomy, e.g., \cite[Theorem 8.8]{Teschl}, the $\omega$-limit set of any orbit coincides with $(I^*,R^*)$ and hence $(I^*,R^*)$ is globally asymptotically stable. By the Vidyasagar theorem, $(N_h^*,I^*,R^*)$ is globally asymptotically stable.

Let us return to the original problem. Using Theorem \ref{thm1} we can state that, if $\mc R_0>1$, then for a fixed small $\mu_d$, the solutions $$(N_{h,\e}(t),I_{h,\e}(t),R_{h,\e}(t), N_{v,\e}(t),I_{v,\e}(t))$$ to \eqref{mal1} converge as $\e\to 0$ to
$$(\bar N_{h}(t),\bar I_{h}(t),\bar R_{h}(t), \bar N_{v}(t),\bar I_{v}(t)),$$
where $(\bar N_{h}(t),\bar I_{h}(t),\bar R_{h}(t))$ solves \eqref{mal3}, $\bar N_{v} = N^*_v$ and $\bar I_v$ is given by \eqref{qssm} (with $I_h$ and $N_h$ replaced by, respectively, $\bar I_h$ and $\bar N_h$), uniformly on any interval $[t_0,\infty), t_0>0.$ This means that, for small $\e$ , the endemic equilibria of \eqref{mal1e} are close to $(\bar N^*_h, \bar I^*,\bar R^*, N_v^*, \bar I_v^*)$, where $(\bar N^*_h, \bar I^*,\bar R^*)$ is the endemic equilibrium of \eqref{mal3a} and
$$
\bar I^*_v = N_v^*\frac{\sigma_v\beta_{vh}\bar I^*_h}{\mu_v \bar N^*_h +\sigma_v\beta_{vh} \bar I^*_h}.
$$
\subsection{Numerical simulations for Subsection \ref{ss2.3}}
In this section, we shall show some simulations supporting the theoretical results for \eqref{mal1} and \eqref{mal1e}. We use the convention and parameters as in Section \ref{ss22}. For the demography, we use the logistic model with $r=0.1$ and $K_h=74830455.$  We recall that we present simulations for $\mathcal{R}_0>1$ (where we use  $\beta_v=1.5$ and $\beta_h=10$) and $\mathcal{R}_0<1$ (with $\beta_v=0.18$ and $\beta_h=4.4$) on the right.
\begin{figure}[!h]
\begin{center}
\includegraphics[scale=0.4]{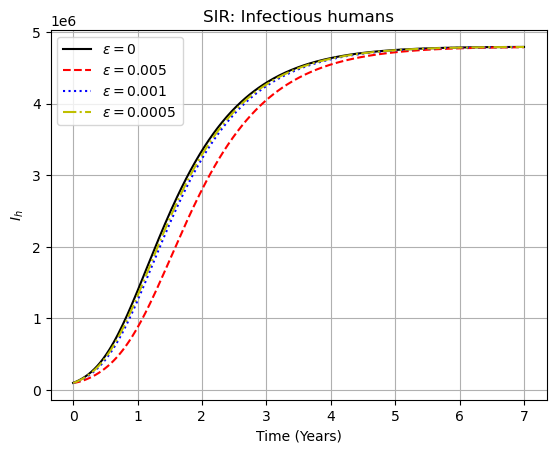}
\includegraphics[scale=0.4]{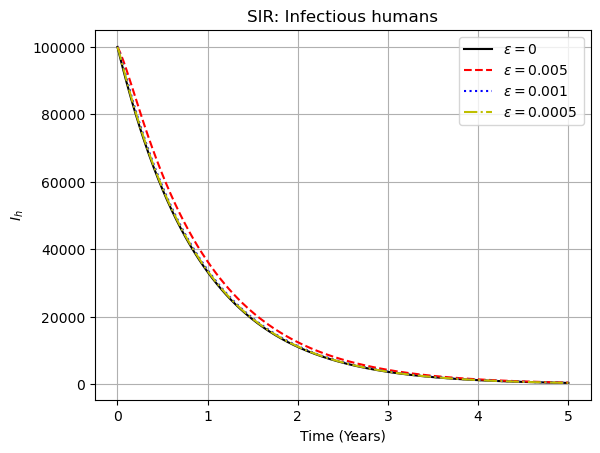}
\end{center}
\caption{These figures show the graphs of the solutions $I_{h,\epsilon}(t)$ of \eqref{mal1e}  for $\epsilon = 0.005,\;0.001\;\text{and}\; 0.0005$ and the solution $I_h(t)$ of the equation  \eqref{m2}, represented by $\epsilon = 0$. }
\label{Ih_L}
\end{figure}

\begin{figure}[!h]
\begin{center}
\includegraphics[scale=0.4]{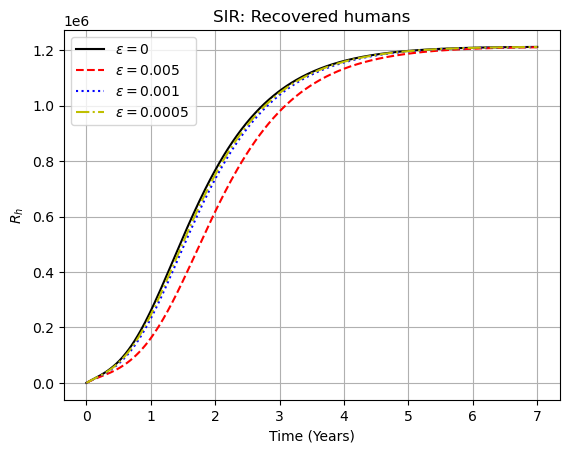}
\includegraphics[scale=0.4]{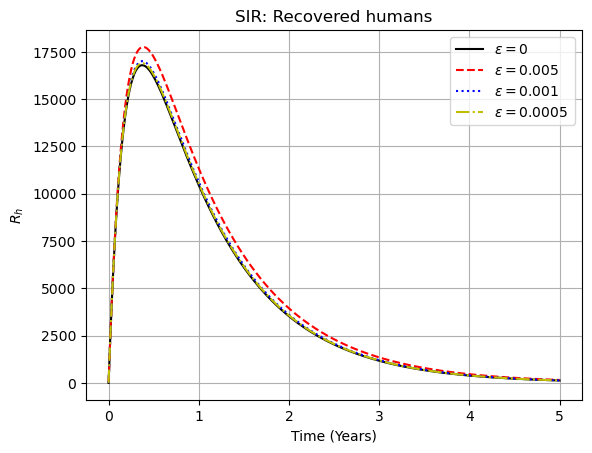}
\end{center}
\caption{Graphs of the solutions $R_{h,\epsilon}(t)$ of \eqref{mal1e}  for $\epsilon = 0.005,0.001$ and $0.0005$ and the solution $R_h(t)$ of \eqref{m2}, represented by $\epsilon = 0$.}
\label{Rh_L}
\end{figure}
\begin{figure}[!h]
\begin{center}
\includegraphics[scale=0.38]{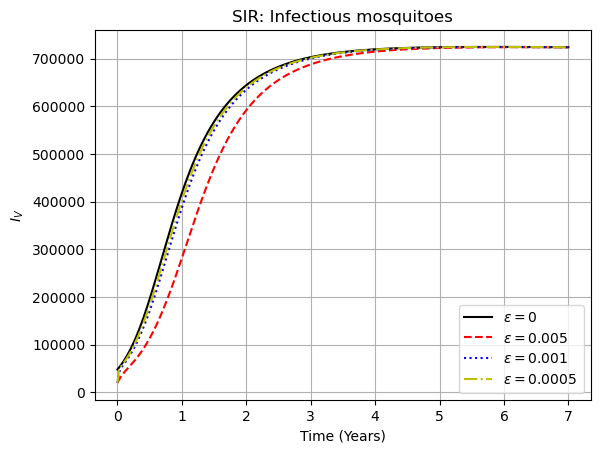}
\includegraphics[scale=0.38]{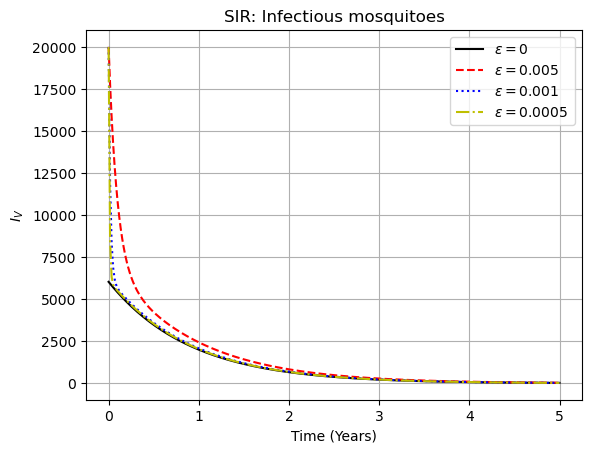}
\end{center}
\caption{Graphs of the solutions $I_{v,\epsilon}(t)$ of \eqref{mal1e}  for $\epsilon = 0.005$, $0.001$ and $0.0005$ and  $\bar{I}_{v,0}(t)$ given in equation \eqref{Iv0}, represented by $\epsilon = 0.$}
\label{Iv_L}
\end{figure}

\appendix
\renewcommand{\theequation}{\thesection.\arabic{equation}}
\setcounter{equation}{0}
\section{Tikhonov theorem}
In this paper, we consider autonomous singularly perturbed systems
\begin{equation}
\begin{split}
\mb u'_{\e,t}&={\mb f}(\mb u_\e,\mb v_\e,\e), \quad \mb u_\e(0) =\mr {\mb u},\\
\e \mb v'_{\e,t} &={\mb g}(\mb u_\e,\mb v_\e,\e),\quad \mb v_\e(0)= \mr{\mb  v},
\end{split}
\label{(iss)}
\end{equation}
where $\mb f$ and $\mb g$ are sufficiently smooth functions defined on open domains of $\mbb R^n\times \mbb R^m\times \mbb R$ acting, respectively, into $\mbb R^n$ and $\mbb R^m$, and  $\e$ is a small positive parameter. We often simultaneously consider the fast version of \eqref{(iss)}, obtained by the rescaling $\tau = \frac{t}{\e}$,
\begin{equation}
\begin{split}
\ti{\mb u}'_{\e,\tau}&=\e {\mb f}(\ti{\mb u}_\e,\ti{\mb v}_\e,\e), \\
\ti{\mb v}'_{\e,\tau} &={\mb g}(\ti{\mb u}_\e,\ti{\mb v}_\e,\e),
\end{split}
\label{CE1a}
\end{equation}
which is equivalent to \eqref{(iss)} for $\e >0$.
 Tikhonov theorem gives conditions ensuring that the solutions $(\mb u_\e(t), \mb v_\e(t))$ of (\ref{(iss)}) converge to
$(\mb{\bar u}(t), \mb{\phi}(t,\bar{\mb u}))$, where $\mb{\bar v} = \mb\phi(\mb u)$ is the solution to the  equation
\begin{equation}
0 =  \mb g(\mb u,\mb v,0),
\label{deg1}
\end{equation}
called the \textit{quasi-steady state (QSS)}, and $\mb{\bar u}(t)$ solves the \textit{reduced equation}
\begin{equation}
{\mb u}'_{,t}  =  \mb f(\mb u,\mb{\phi}(\mb u),0), \qquad \mb u(0)  =  \,\mr{\mb u},
\label{deg2}
\end{equation}
obtained from the first equation of (\ref{(iss)}) by substituting the unknown $\mb v$ by the known quasi steady state $\bar{\mb  v}$ (dependent on $\mb u$). We assume that \eqref{deg2} is uniquely solvable on some interval $[0,T].$

\noindent
 \textbf{Main assumptions: }
  \begin{description}
   \item {a)} the quasi-steady states are isolated in some subset $\bar{\mc U}$ of the domain of \eqref{(iss)};
   \item {b)} for each fixed  $\mb u$, the quasi-steady state solution $\mb \phi(\mb u)$ of (\ref{deg1}) is a uniformly asymptotically stable equilibrium of \begin{equation}\label{auxil'}
\tilde{\mb v}'_{,\tau} = \mb g(\mb u,\tilde{\mb v},0);
\end{equation}
  \item {c)} $\bar{\mb  u}(t) \in \mc U$ for $t\in [0,T]$ provided $\,\mr {\mb u} \in \bar{\mc U}$;
  \item {d)} solutions to
  \begin{equation}
  \hat{\mb v}'_{,\tau} = \mb g(\mr {\mb u},\hat{\mb v},0), \quad \hat{\mb v}(0) = \mr {\mb v}
  \label{ileq}
  \end{equation}
  converge to $\bar{\mb v}( \,\mr {\mb u})$ as $\tau\to \infty$.
   \end{description}
Then, the following theorem is true.\setcounter{theorem}{0}
\begin{theorem}\label{tikhonovth'}
Let the above assumptions be satisfied. Then
there exists $\varepsilon_0 >0$ such that for any $\varepsilon\in\, (\,0,\varepsilon_0 ]$ there
exists a unique solution
$(\mb u_{\varepsilon} (t),\mb v_{\varepsilon} (t))$ of Problem (\ref{(iss)}) on $[0,T ]$ and
\begin{subequations}\label{tikhonovcon'}
\begin{align}
\lim\limits_{\varepsilon\to 0} \mb u_{\varepsilon} (t) & =\bar{\mb u}(t),\qquad t\in [0,T ],\label{16a}\\
\lim\limits_{\varepsilon\to  0} \mb v_{\varepsilon} (t) & = \bar{\mb v}(t),\label{16b}
\qquad  t\in \, (\, 0,T ]\,,
\end{align}
\end{subequations}
where $\bar{\mb u}(t)$ is the solution of  (\ref{deg2}) and $\bar{\mb v}(t)= \mb{\phi}( \bar{\mb u}(t))$ is the solution of (\ref{deg1}).
\end{theorem}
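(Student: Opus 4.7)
The plan is to run the classical two-scale argument behind Tikhonov's theorem: construct a zeroth-order composite approximation made of the outer profile plus an initial-layer correction, derive the resulting error system, and close the estimates by splitting $[0,T]$ into a short initial layer near $t=0$ and a bulk region on which the slow manifold is uniformly attracting.

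First I would settle existence of the full solution $(\mb u_\e,\mb v_\e)$ on $[0,T]$ for all sufficiently small $\e>0$. For each fixed $\e>0$ the system \eqref{(iss)} is a standard (non-singular) ODE, so short-time existence is automatic; to push the solution forward to $T$ I use the smoothness of $\mb f,\mb g$ on the closure of $\bar{\mc U}$ together with assumption (a), which ensures $\mb\phi$ is a well-defined, isolated branch on $\bar{\mc U}$, and a continuation argument based on the a priori bounds produced in the error estimate below.

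Second, I write the composite ansatz
\begin{equation*}
\mb u^{\mathrm{app}}_\e(t) := \bar{\mb u}(t),\qquad \mb v^{\mathrm{app}}_\e(t):= \mb\phi(\bar{\mb u}(t)) + \bigl(\hat{\mb v}(t/\e)-\mb\phi(\mr{\mb u})\bigr),
\end{equation*}
where $\bar{\mb u}$ solves \eqref{deg2} and $\hat{\mb v}$ solves \eqref{ileq}. At $t=0$ this reproduces $(\mr{\mb u},\mr{\mb v})$ exactly, and substituting the ansatz into \eqref{(iss)} leaves a residual that is $O(\e)$ on the slow scale and, after rescaling to $\tau=t/\e$, $O(\e)$ on the fast scale, thanks to the smoothness of $\mb f,\mb g$ and to the fact that $\mb g(\bar{\mb u}(t),\mb\phi(\bar{\mb u}(t)),0)\equiv 0$ by \eqref{deg1}.

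Third, I control the errors $\mb U_\e := \mb u_\e-\mb u^{\mathrm{app}}_\e$ and $\mb V_\e := \mb v_\e - \mb v^{\mathrm{app}}_\e$ in two stages. On an initial layer $[0,\e T_\e]$ with $T_\e\to\infty$ and $\e T_\e\to 0$, I rescale to $\tau=t/\e$, use the Lipschitz continuity of $\mb f,\mb g$ and the exponential decay $\hat{\mb v}(\tau)-\mb\phi(\mr{\mb u})\to 0$ guaranteed by assumption (d), and apply Gronwall to get $\|\mb U_\e(\e T_\e)\|+\|\mb V_\e(\e T_\e)\|=o(1)$ as $\e\to 0$. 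On the bulk $[\e T_\e,T]$, the initial-layer correction is already exponentially small, so $\mb v^{\mathrm{app}}_\e$ is close to the slow manifold, and the uniform asymptotic stability assumption (b), transplanted along the moving base point $\mb\phi(\bar{\mb u}(t))$ via assumption (c), supplies a $C^1$ Lyapunov function $W(\mb v;\mb u)$ for \eqref{auxil'}. Differentiating $W(\mb v_\e;\bar{\mb u})$ along the perturbed flow and using the frozen-coefficient bound gives a dissipative inequality of the form $(W\circ\mb v_\e)'\le -(\kappa/\e)W\circ \mb v_\e + C$, which yields $\|\mb V_\e\|\to 0$ uniformly on $[\e T_\e,T]$. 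Feeding this back into the slow equation and applying Gronwall to $\mb U_\e$ closes the loop and gives \eqref{16a} uniformly on $[0,T]$; \eqref{16b} then follows on $(0,T]$ because, for any $t>0$, the boundary-layer term $\hat{\mb v}(t/\e)-\mb\phi(\mr{\mb u})$ vanishes as $\e\to 0$.

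The main obstacle I expect is the transition from the pointwise stability hypothesis (b), which is an infinitesimal statement at each frozen $\mb u$, to a uniform exponential contraction along the whole curve $t\mapsto\mb\phi(\bar{\mb u}(t))$. Handling this cleanly requires either a converse Lyapunov construction depending continuously on the parameter $\mb u$, or a compactness argument using (c) to confine the dynamics to a bounded portion of $\bar{\mc U}$; once such a uniform Lyapunov function is in hand, the remaining estimates are routine Gronwall applications.
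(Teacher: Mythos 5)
You should first note that the paper does not prove this statement at all: Theorem \ref{tikhonovth'} is the classical Tikhonov theorem, quoted from \cite{TVS} (see also \cite{Fe}) and stated in the appendix purely as background for the infinite-interval extension, Theorem \ref{thm1}. So there is no in-paper proof to compare against, and your attempt must be judged on its own. What you propose is the standard composite-expansion argument (outer profile plus initial layer, error estimates split between a fast region $[0,\e T_\e]$ and a bulk region $[\e T_\e,T]$, closed by a Lyapunov/Gronwall loop), which is indeed the textbook route to this theorem, and you correctly isolate the genuinely delicate step: upgrading the frozen-parameter stability in assumption b) to a contraction that is uniform along the moving base point $\mb\phi(\bar{\mb u}(t))$, which needs a parameter-continuous converse Lyapunov construction together with the compactness supplied by c).

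Two places where you use the hypotheses more strongly than they are stated. First, assumption b) asserts only \emph{uniform asymptotic} stability, not exponential stability, so the dissipative inequality $(W\circ\mb v_\e)'\le -(\kappa/\e)W+C$ with a linear rate is not available; the paper only imposes the spectral-bound condition \eqref{mug} later, for the Chapman--Enskog machinery and the infinite-interval result. In the classical setting the inequality must be of the form $(W\circ\mb v_\e)'\le -\e^{-1}\alpha(W)+C$ with $\alpha$ a class-$\mathcal{K}$ function, which still forces $W=o(1)$ in the bulk and suffices because the conclusion \eqref{tikhonovcon'} is qualitative convergence with no rate. Second, assumption d) guarantees only that $\hat{\mb v}(\tau)\to\mb\phi(\mr{\mb u})$, not exponential decay of the layer; again this costs you only the rate, since $\e\int_0^{T/\e}\|\hat{\mb v}(\tau)-\mb\phi(\mr{\mb u})\|\,d\tau$ need not be $O(\e)$ but is still $o(1)$ by dominated convergence once the layer is bounded. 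With these two adjustments your sketch is a correct (if condensed) reconstruction of the classical proof.
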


 The convergence in (\ref{16a}) is uniform in  $t\in [0,T]$, but in (\ref{16b}) it is uniform only in each  $[\zeta  ,T]$, $\zeta  >0$. This is the so-called initial layer effect. We can eliminate this lack of convergence by adding the initial layer corrector, the solution to \eqref{ileq}.

As noted in the Introduction, the assumptions of the Tikhonov theorem, even if they are satisfied on the whole space, do not suffice to extend \eqref{tikhonovcon'} to the interval $(0,\infty)$.

We shall give a generalization of Theorem \ref{tikhonovth'} valid uniformly on unbounded time intervals. Our result also includes correction terms allowing for such an approximation to be valid with $O(\e^2)$ accuracy. For this, we need an algorithm
to construct such corrections, which is provided in the next section.

\section{ Approximation of the slow manifold and the Chapman--Enskog procedure}\label{ChE}
\subsection{The Chapman--Enskog procedure}
\setcounter{equation}{0}
Considering this paper's applications, we assume that \eqref{(iss)} is autonomous and that $\mb g$ is independent of $\e$. Then, we look for the expansion $({\mb u}_\e,{\mb v}_\e) = (\bar {\mb u}_\e, \bar {\mb v}_\e)+O(\e^2)= (\bar {\mb u}_\e, \bar {\mb v}_0+\e \bar {\mb v}_1) +O(\e^2)$, so that  \eqref{(iss)} takes the form
\begin{subequations}\label{CE2}
\begin{align}
&\bar {\mb u}'_{\e,t}= {\mb f}(\bar {\mb u}_\e,\bar {\mb v}_0)+\e({\mb f}_{,\mb v}(\bar {\mb u}_\e,\bar {\mb v}_0,0)\bar {\mb v}_1+\mb f_{,\e}(\bar {\mb u}_\e,\bar {\mb v}_0,0))+O(\e^2), \label{CE2a}\\
&\e (\bar {\mb v}'_{0,t}+\e\bar {\mb v}'_{1,t}) ={\mb g}(\bar {\mb u}_\e,\bar {\mb v}_0+\e\bar {\mb v}_1) + O(\e^2)\label{CE2b},
\end{align}
\end{subequations}
where $\phantom{x}_{,\mb v}$ denotes the Jacobi matrix of the respective function with respect to $\mb v$.
In what follows, we shall provide a sketch of the derivation of the higher-order corrections, referring the reader to \cite{BanRG} for details. Expanding \eqref{CE2b} around $(\bar{\mb u}_\e,\bar{\mb v}_0)$ with yet unknown  $\bar{\mb u}_\e,$ we find $\bar{\mb v}_0$ satisfying
\begin{equation}
\mb g(\bar{\mb u}_\e, \bar{\mb v}_0)= \mb g(\bar{\mb u}_\e, \mb \phi(\bar{\mb u}_\e)) \equiv 0,
\label{newv}
\end{equation}
which is the same as \eqref{deg1} (remember $\mb \phi$ is independent of $\e$) and hence
$\mb v_1$ can be obtained from \begin{equation}\label{newv1}
 {\mb g}_{,\mb v}( \bar{\mb u}_\e,\bar {\mb v}_0)\bar {\mb v}_1=\bar{\mb v}'_{0,t}.
\end{equation}
 If ${\mb g}_{,\mb v}$ is nonsingular in some tubular neighbourhood of the solution to \eqref{deg1}, for $\bar{\mb u}_\e$  close to $\bar{\mb u}$
\begin{align}
\bar {\mb v}_{1} &= {\mb g}^{-1}_{,{\mb v}}(\bar{\mb u}_\e,\mb \phi(\bar{\mb u}_\e))\bar {\mb v}'_{0,t}
= -[{\mb g}^{-1}_{,{\mb v}}(\bar{\mb u}_\e,\mb \phi(\bar{\mb u}_\e))]^2{\mb g}_{,\mb u}(\bar{\mb u}_\e,\mb \phi(\bar{\mb u}_\e)) \bar {\mb u}'_{\e,t}.     \label{CE5}\end{align}
  Thus, substituting \eqref{CE5} into \eqref{CE2a}
and discarding the $O(\e^2)$ terms to provide an approximate closure, we obtain that the  approximation of $({\mb u}_\e,{\mb v}_\e)$ at the $\e$ level is given by $(\bar{\mb u}_\e(t),\bar{\mb v}_\e(t))=(\bar{\mb u}_\e(t),\bar{\mb v}_0(t)+ \e\bar{\mb v}_1(t))$, where
\begin{equation}\label{CE5a}
\begin{split}
 \bar{\mb u}'_{\e,t}&={\mb f}(\bar{\mb u}_\e,{\mb \phi}(\bar{\mb u}_\e),0)\\
 &\phantom{x}- \e{\mb f}_{,\mb v}(\bar{\mb u}_\e,{\mb \phi}(\bar{\mb u}_\e),0)[{\mb g}^{-1}_{{,\mb v}}(\bar{\mb u}_\e,{\mb \phi}(\bar{\mb u}_\e))]^2{\mb g}_{,\mb u}(\bar{\mb u}_\e,{\mb \phi}(\bar{\mb u}_\e)) {\mb f}(\bar{\mb u}_\e,{\mb \phi}(\bar{\mb u}_\e))\\
 &\phantom{x}+\e{\mb f}_{,\e}(\bar{\mb u}_\e,{\mb \phi}(\bar{\mb u}_\e),0),
 \end{split}
 \end{equation}
 while
 \begin{equation}
 \bar{\mb v}_\e=\bar{\mb v}_{0}+\e \bar{\mb v}_{1}= {\mb \phi}({\bar{\mb u}_\e}) -\e [{\mb g}^{-1}_{,{\mb v}}({\bar{\mb u}_\e},{\mb \phi}({\bar{\mb u}_\e}))]^2{\mb g}_{,\mb u}({\bar{\mb u}_\e},{\mb \phi}({\bar{\mb u}_\e})) {\mb f}({\bar{\mb u}_\e},{\mb \phi}({\bar{\mb u}_\e}),0).
\label{CE5b}
\end{equation}
The above terms are referred to as the bulk part of the expansion. Since, however, $\bar{\mb v}_\e$ is determined by $\bar{\mb u}_\e$ through an algebraic relation, we obtain the initial layer effect mentioned earlier.   To address this problem, we introduce the initial layer in the standard way, that is, we look for the approximation of the solution in the form
\begin{equation}({\mb u}_\e,{\mb v}_\e) = (\bar {\mb u}_\e + \e \ti {\mb u}_1 +O(\e^2), \bar {\mb v}_{0}+\e \bar {\mb v}_{1} + \ti {\mb v}_0 + \e\ti {\mb v}_1 + O(\e^2)).\label{oe2approx}\end{equation}
We note the absence of $\ti {\mb u}_0$ since the approximation for $\mb u_\e$ may be constructed to satisfy the original initial data. To find the initial layer, we look for the approximation of \eqref{CE1a},
\begin{align}
&\bar {\mb u}'_{\e,t}(\e\tau) + \ti {\mb u}'_{1,\tau}(\tau)= {\mb f}(\bar {\mb u}_\e(\e\tau) + \e \ti {\mb u}_1(\tau),\bar {\mb v}_0(\e\tau)+\e\bar {\mb v}_1(\e\tau)+ \ti {\mb v}_0(\tau) + \e\ti {\mb v}_1(\tau),\e), \nn\\
&\e\bar {\mb v}'_{0,t}(\e\tau)+\e^2\bar {\mb v}'_{1,t}(\e\tau) + \ti {\mb v}'_{0,\tau}(\tau) + \e\ti {\mb v}'_{1,\tau}(\tau) \label{CE2il}\\
&\phantom{xxxxxxxxxxxxx}={\mb g}(\bar {\mb u}_\e(\e\tau) + \e \ti {\mb u}_1(\tau),\bar {\mb v}_0(\e\tau)+\e\bar {\mb v}_1(\e\tau)+ \ti {\mb v}_0(\tau) + \e\ti {\mb v}_1(\tau) ),\nn
\end{align}
where we discarded $O(\e^2)$ terms. At the zeroth level of approximation, the second equation gives
\begin{equation}\begin{split}
\ti {\mb v}'_{0,\tau}(\tau) &= {\mb g}(\mathring {\mb u}, {\mb \phi}(\mathring {\mb u})+ \ti {\mb v}_0(\tau)),\\
\ti {\mb v}_0(0)&= \mathring {\mb v} -{\mb \phi}(\mathring {\mb u}),
\end{split}
\label{tiv0}
\end{equation}
where $\ti {\mb v}_0(\tau)$ converges to 0 as $\tau\to \infty$ by Main assumption b).
Next, setting $\e=0$ in the first equation of  \eqref{CE2il} and  using \eqref{CE5a}, also with $\e=0$,
$$
\ti {\mb u}'_{1,\tau} (\tau) = {\mb f}(\mr {\mb u}, {\mb \phi}(\mr {\mb u})+\ti{\mb v}_0(\tau),0) - {\mb f}(\mr {\mb u}, {\mb \phi}(\mr {\mb u}),0).
$$
Since $\ti {\mb u}_1$ should vanish at infinity, the solution must be given by
\begin{equation}
\ti {\mb u}_{1}(\tau) = -\cl{\tau}{\infty}({\mb f}(\mr {\mb u}, {\mb \phi}(\mr {\mb u})+\ti{\mb v}_0(\tau),0) - {\mb f}(\mr {\mb u}, {\mb \phi}(\mr {\mb u}),0))ds
\label{tiu1}
\end{equation}
and thus its initial value for $\ti {\mb u}_{1}$ is predetermined as
\begin{equation}
\ti {\mb u}_{1}(0) = -\cl{0}{\infty}({\mb f}(\mr {\mb u}, {\mb \phi}(\mr {\mb u})+\ti{\mb v}_0(\tau),0) - {\mb f}(\mr {\mb u}, {\mb \phi}(\mr {\mb u}),0))ds.
\label{tiu1iv}
\end{equation}
To balance this, by \eqref{oe2approx},  we consider \eqref{CE5a} with the following  correction to $\mr {\mb u},$
\begin{equation}
\bar {\mb u}_\e(0) = \mr {\mb u} + \e \cl{0}{\infty}({\mb f}(\mr {\mb u}, {\mb \phi}(\mr {\mb u})+\ti{\mb v}_0(\tau),0) - {\mb f}(\mr {\mb u}, {\mb \phi}(\mr {\mb u}),0))ds = \mr {\mb u} -\e\ti{\mb u}_{1}(0).
\label{baruic}
\end{equation}
We refrain from giving the formula for $\ti{\mb v}_1$ due to its length. 

For the validity of the above calculations and subsequent error estimates, we need to adopt several assumptions. As they are quite technical, here
we shall provide a brief description of the most salient ones, referring the reader to \cite{BanRG, BanViet} for details. In general, we assume that all assumptions for the validity of Theorem \ref{tikhonovth'} are satisfied on $\mbb R^n\times \mbb R^m$ and \eqref{deg1}
admits an isolated solution $\bar{\mb v} = \mb\phi(\mb u)$ for any ${\mb u}\in \mbb R^n$ such that \eqref{deg2} has a unique solution $\bar{\mb u}(t)$ on $[0,\infty)$,  bounded together with its derivative. We strengthen assumption b) by assuming
\begin{equation}
\sup\limits_{t\in [0,\infty)} s({\mb g}_{,\mb v}(\bar{\mb u}(t), \mb{\phi}(\bar{\mb u}(t)))=: -\kappa<0,
\label{mug}
\end{equation}
for some $\kappa>0$, where for a matrix $A$, $s(A)$ denotes its spectral bound, that is, the maximum of the real parts of its eigenvalues. Regularity assumptions ensure that \eqref{mug} is satisfied, with possibly different constant, in some tubular neighbourhood of the trajectory of $(\bar{\mb u}(t), \mb{\phi}(\bar{\mb u}(t))$. This assumption means that the QSS (also called the  \textit{slow manifold}), \begin{equation}\mc M = \{({\mb u},{\mb v});\; {\mb v} = \mb \phi({\mb u})\},\label{sm}\end{equation} consists of hyperbolic attractive equilibria of the fast dynamics $\mb v_{,\tau} = \mb g(\mb u,\mb v)$, see \eqref{CE1a}. Further, we assume that $\mr{\mb v}$ belongs to the basin of attraction of $\mc M$, that is, the solution $\ti{\mb v}_0$ to \eqref{tiv0} exponentially converges to $0$ as $\tau\to \infty$, see \cite[Remark 1]{BanRG}. Also, $\ti {\mb u}_1$ and $\ti {\mb v}_1$ have the same property, see \cite[Lemma 1]{BanRG}.

 The crucial assumption for the validity of the approximation on $[0,\infty)$, \cite{Hop, BanRG},  is the requirement that the Jacobi matrix
$$
\mc J_{\mb f}(\bar{\mb u}(t)):=\mb f_{,\mb u}(\bar{\mb u}(t), \mb{\phi}(\bar{\mb u}(t)),0) + \mb f_{,\mb v}(\bar{\mb u}(t), \mb{\phi}(\bar{\mb u}(t)),0){\mb\phi}_{,\mb u}(\bar{\mb u}(t)), \quad t\in [0,\infty),
$$
has the exponential dichotomy property (or is uniformly exponentially stable), see \cite[Chapter III]{Cop}. For  this paper's applications, it will suffice   to know, \cite[Remark 3]{BanViet}, that if $\bar{\mb u}(t) \to \bar{\mb u}^*$ as $t\to \infty$ and the spectral bound of $\mc J_{\mb f}(\bar{\mb u}^*)$ satisfies
\begin{equation}
s(\mc J_{\mb f}(\bar{\mb u}^*)) = s(\mb f_{,\mb u}(\bar{\mb u}^*, \mb{\phi}(\bar{\mb u}^*),0) + \mb f_{,\mb v}(\bar{\mb u}^*, \mb{\phi}(\bar{\mb u}^*),0){\mb\phi}_{,\mb u}(\bar{\mb u}^*))<0,
\label{A5}
\end{equation}
(that is, $\bar{\mb u}^*$ is exponentially stable) then  $\mc J_{\mb f}(t)$ also has the exponential dichotomy property. We have, \cite[Theorem 3]{BanRG},
 \begin{theorem} \label{thm1} Under the above assumptions the Chapman-Enskog approximation errors
 \begin{equation}
 \begin{split}
\mb \zeta_\e(t) &= \mb u_\e(t) - \bar{\mb u}_\e(t) -  \e\ti{\mb u}_1(\tau),\\
\mb \eta_\e(t) &= \mb v_\e(t) - \mb \phi(\bar{\mb u}_\e(t)) - \ti{\mb v}_0(\tau) - \e\bar{\mb v}_{1}(t) - \e\ti{\mb v}_1(\tau),
\end{split}
\label{ereat1}
\end{equation}
satisfy
\begin{equation*}
\mb \zeta_\e(t) = O(\e^2), \qquad \mb \eta_\e(t) = O(\e^2)
\end{equation*}
as $\e\to 0$ uniformly on $[0,\infty)$.
\end{theorem}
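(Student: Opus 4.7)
My plan is to insert the full Chapman--Enskog ansatz
$(\mb u_\e, \mb v_\e) = (\bar{\mb u}_\e + \e\ti{\mb u}_1 + \mb \zeta_\e,\, \mb\phi(\bar{\mb u}_\e) + \ti{\mb v}_0 + \e\bar{\mb v}_1 + \e\ti{\mb v}_1 + \mb \eta_\e)$
into \eqref{(iss)}, Taylor-expand the right-hand sides around the bulk state $(\bar{\mb u}_\e,\mb\phi(\bar{\mb u}_\e))$, and cancel terms using the defining relations of the approximation. The bulk equations \eqref{newv}--\eqref{CE5b} were constructed so that the orders $\e^0$ and $\e^1$ in the slow variable vanish identically on the slow manifold; the initial-layer definitions \eqref{tiv0}, \eqref{tiu1}, \eqref{tiu1iv}, \eqref{baruic} were designed so that the orders $\e^0$ and $\e^1$ in the fast variable also vanish and so that $\mb\zeta_\e(0)=0$ and $\mb\eta_\e(0)=0$. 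Hence the substitution produces a residual system with inhomogeneity of the form $\mb R_\e(t,\tau) = O(\e^2)$ pointwise, with the important structural feature that any $\tau$-dependent contribution is multiplied by a factor that decays exponentially in $\tau$ (via \eqref{mug}) and is therefore integrable with respect to $t$ uniformly in $\e$.

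Next, I would linearise the residual system at the approximation. Schematically this gives
\begin{equation*}
\begin{split}
\mb \zeta'_{\e,t} &= \mc J_{\mb f}(\bar{\mb u}_\e(t))\,\mb \zeta_\e + \mb f_{,\mb v}(\cdot)\,\mb \eta_\e + \mb R^u_\e(t,\tau) + \mb N^u_\e(\mb\zeta_\e,\mb\eta_\e),\\
\e\,\mb \eta'_{\e,t} &= \mb g_{,\mb v}(\cdot)\,\mb \eta_\e + \e\,\mb g_{,\mb u}(\cdot)\,\mb \zeta_\e + \mb R^v_\e(t,\tau) + \mb N^v_\e(\mb\zeta_\e,\mb\eta_\e),
\end{split}
\end{equation*}
with $\mb N^u_\e, \mb N^v_\e$ quadratic in the errors. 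By assumption \eqref{A5} and $\bar{\mb u}(t)\to \bar{\mb u}^*$, the slow propagator generated by $\mc J_{\mb f}(\bar{\mb u}_\e(t))$ enjoys an exponential dichotomy on $[0,\infty)$ (this is the cited criterion from \cite[Remark 3]{BanViet}). By \eqref{mug}, the fast evolution governed by $\mb g_{,\mb v}(\bar{\mb u}(t),\mb\phi(\bar{\mb u}(t)))$ on the stretched scale $\tau = t/\e$ admits a uniformly exponentially stable propagator. I would then convert the linearised system to an integral equation using the associated Green's kernels and set it up as a fixed-point problem in the weighted norm $\|\cdot\|_\e := \sup_{t\geq 0}\e^{-2}(|\mb\zeta_\e(t)|+|\mb\eta_\e(t)|)$.

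A Banach contraction argument in that norm then produces the desired $O(\e^2)$ bound, provided the off-diagonal couplings $\mb f_{,\mb v}\mb\eta_\e$ and $\e\,\mb g_{,\mb u}\mb\zeta_\e$ are handled carefully. This is where I expect the main obstacle: a naive Gronwall estimate of the coupled system produces an $e^{Ct}$ factor and destroys uniformity on $[0,\infty)$. The cure is to split every time integral into an initial-layer window $[0,c\e|\log\e|]$, on which the fast semigroup alone already drives $\mb \eta_\e$ below $O(\e^2)$ regardless of $\mb\zeta_\e$, and a bulk region $[c\e|\log\e|,\infty)$, on which $\mb\eta_\e$ is effectively slaved to $\e\,\mb g_{,\mb v}^{-1}\mb g_{,\mb u}\mb\zeta_\e$ up to $O(\e^2)$ and therefore feeds back into the slow equation only through a perturbation that respects the dichotomy. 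The Lipschitz contribution from $\mb N^u_\e,\mb N^v_\e$ is of order $\e^2\|\cdot\|_\e$ and shrinks the contraction constant below $1$ for small $\e$. Combining the two regions through the dichotomy's Green function yields $\|\mb\zeta_\e\|_\infty + \|\mb\eta_\e\|_\infty = O(\e^2)$ on $[0,\infty)$, which is the assertion of the theorem.
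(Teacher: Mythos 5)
The paper does not prove this theorem itself: it is quoted verbatim from \cite[Theorem 3]{BanRG}, so there is no in-text proof to compare against. Your outline is a faithful reconstruction of the standard argument used there (and going back to \cite{Hop}): write the error system for $(\mb\zeta_\e,\mb\eta_\e)$, observe that the construction of the bulk and initial-layer terms leaves an $O(\e^2)$ residual (with $O(\e)$ pieces allowed only when multiplied by functions decaying exponentially in $\tau$, since these integrate to $O(\e^2)$), and close the estimate via the variation-of-constants/fixed-point argument built on the exponential dichotomy of $\mc J_{\mb f}$ from \eqref{A5} and the uniform fast stability \eqref{mug}. The only place where you make the argument harder than it needs to be is the treatment of the coupling: no splitting of the time axis into an initial-layer window and a bulk region is required, because the fast Green's kernel satisfies $\frac{1}{\e}\int_0^t e^{-\kappa(t-s)/\e}\,\e\,|\mb g_{,\mb u}|\,|\mb\zeta_\e(s)|\,ds \leq C\e\sup_s|\mb\zeta_\e(s)|$, so the fast error is automatically slaved to the slow one with a gain of $\e$, and feeding this back through the (uniformly integrable) dichotomy kernel yields $\|\mb\zeta_\e\|_\infty\leq C\e\|\mb\zeta_\e\|_\infty+O(\e^2)$, which closes directly for small $\e$. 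Two minor points to tidy up: $\mb\eta_\e(0)$ is only $O(\e^2)$, not zero, after the correction \eqref{baruic} and the choice of $\ti{\mb v}_1(0)$; and the dichotomy is hypothesised along $\bar{\mb u}(t)$ while the linearisation is taken along $\bar{\mb u}_\e(t)$, so you need the roughness of exponential dichotomies under small perturbations (Coppel) to transfer it.
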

We note that by including $O(\e)$ terms in \eqref{ereat1} in the error terms, the zeroth order terms give $O(\e)$ error uniform on $[0,\infty)$.

\subsection{ Fenichel's geometric singular perturbation theory, the Chapman--Enskog method  and the group renormalization method}
A simpler version of Theorem \ref{thm1} was proved in \cite{MarCz} where, however, the construction of the asymptotic expansion was carried out by a more cumbersome renormalization group method; also, the authors did not construct the first order initial layer corrector $\ti{\mb v}_1,$  which resulted in $O(\e^2)$ error being valid only on $[t_0,\infty)$ for any $t_0>0$. To compare the renormalization group and Chapman--Enskog methods, we refer the reader to \cite{BanRG}.

On the other hand, the Chapman--Enskog expansion can be derived from Fenichel's formulation of the Tikhonov theory. Roughly speaking, see \cite{Kuehn}, Fenichel's theory states that if $\mc M_0$ is a compact submanifold of the slow manifold $\mc M$ satisfying the assumptions of the Tikhonov theorem, then for each sufficiently small $\e$ there is a manifold $\mc M_\e$ diffeomorphic to $\mc M_0,$ which is locally invariant with respect to the flow generated by \eqref{(iss)} and this flow converges to the slow flow on $\mc M_0$ as $\e\to 0$. Following e.g., \cite{RashVent}, we assume that locally $\mc M_\e$ is given as the graph $\mb v = \mb \phi_\e(\mb u)$, with $\mb\phi_0=\mb\phi$. If it is invariant with respect to \eqref{(iss)}, then for any solution $t\mapsto (\mb u_\e(t),\mb v_\e(t))$ we must have $\mb v_\e(t) \equiv \mb \phi_\e(\mb u_\e(t))$ so that
$$
\mb v_{\e,t}(t) \equiv \mb \phi_{\e,\mb u}(\mb u_\e(t))\mb u_{\e,t}(t),
$$
and hence, using \eqref{(iss)},
$$
\mb g(\mb u_\e(t),\mb v_\e(t)) = \e \mb \phi_{\e,\mb u}(\mb u_\e(t))\mb f(\mb u_\e(t),\mb v_\e(t)).
$$
Expanding $\mb v_\e = \mb v_0 +\e \mb v_1+\ldots$, we get
$$
\mb g(\mb u_\e,\mb v_0)+ \e\mb g_{,\mb v}(\mb u_\e,\mb v_0)\mb v_1 = \e \mb \phi_{,\mb u}(\mb u_\e) \mb f(\mb u_\e,\mb v_0)) +O(\e^2)
$$
from where
$$
\mb v_0 = \mb\phi(\mb u_\e).
$$
Next, since $\mb g(\mb u,\mb \phi(\mb u)) \equiv 0$, we have $$\mb g_{,\mb u}(\mb u,\mb \phi(\mb u)) + \mb g_{,\mb v} (\mb u,\mb \phi(\mb u))\mb \phi_{,\mb u}(\mb u)\equiv 0$$
and thus
$$
\mb v_1 = -[\mb g^{-1}_{,\mb v} (\mb u_\e,\mb \phi(\mb u_\e))]^2\mb g_{,\mb u}(\mb u_\e,\mb \phi(\mb u_\e))\mb f(\mb u_\e,\mb \phi(\mb u_\e)),
$$
which corresponds to \eqref{CE5b}. In other words, the Chapman-Enskog bulk expansion is equivalent to considering the slow equation of \eqref{(iss)} on the approximation of the invariant manifold $\mc M_\e$ of \eqref{(iss)} up to $O(\e)$ terms. Note, however, that the Fenichel theorem provides neither the initial layer terms nor long-term estimates (in slow time).


\end{document}